\definecolor{dgreen}{rgb}{0.0, 0.5, 0.13}
\newtheorem{example}{Example}
\newtheorem{conjecture}{Conjecture}
\newtheorem{lemma}{Lemma}
\newcommand{\diag}{{\rm diag}}
\begin{document}

\title{\bf Approximate Functional Differencing\thanks{
We thank Stéphane Bonhomme for useful comments and discussions, and a referee for useful comments.  This research was
supported by the European Research Council grant  ERC-2018-CoG-819086-PANEDA and the Flemish Research Council grant G073620N.}}
\author{\setcounter{footnote}{2}Geert Dhaene\thanks{%
KU Leuven, \texttt{geert.dhaene@kuleuven.be} } \and Martin Weidner%
\thanks{%
University of Oxford, \texttt{martin.weidner@economics.ox.ac.uk} } }
\date{May 2023}

\maketitle
\thispagestyle{empty}
\setcounter{page}{1}

\begin{abstract}
    \noindent 
Inference on common parameters in panel data models with individual-specific fixed effects is a classic example of \citeauthor{neyman1948consistent}'s \citeyearpar{neyman1948consistent} incidental parameter problem (IPP). One solution to this IPP is functional differencing (\citealt{bonhomme2012functional}), which works when the number of time periods $T$ is fixed (and may be small), but this solution is not applicable to all panel data models of interest. Another solution, which applies to a larger class of models, is ``large-$T$'' bias correction (pioneered by \citealt{hahn2002asymptotically} and \citealt{HahnNewey2004}), but this is only guaranteed to work well when $T$ is sufficiently large. This paper provides a unified approach that connects these two seemingly disparate solutions to the IPP. In doing so, we provide an approximate version of functional differencing, that is, an approximate solution to the IPP that is applicable to a large class of panel data models even when $T$ is relatively small.
\end{abstract}

\bigskip
\noindent
{\bf Keywords:} { Panel data, discrete choice, incidental parameters, bias correction, functional differencing}

\bigskip
\noindent
{\bf JEL classification code:} {  C23}

\newpage 

\section{Introduction}

Panel data offer the potential to account for unobserved heterogeneity,
typically through the inclusion of unit-specific parameters; 
see \cite{arellano2003discrete} and \cite{ArellanoBonhomme2011} for reviews. 
Nonlinear panel data models, however, remain challenging to estimate,
precisely because in many models the presence of unit-specific -- or “incidental” -- parameters makes
the maximum likelihood estimator (MLE) of the common parameters inconsistent when
the number of observations per unit, $T$, is finite (\citealt{neyman1948consistent}).
The failure of maximum likelihood has prompted two kinds of reactions.

One approach is to look for point-identifying moment conditions that are free of incidental parameters.
Such moment conditions can come from a conditional or a marginal likelihood (e.g., \citealt{rasch1960studies}, \citealt{lancaster2000incidental}), an invariant likelihood (\citealt{moreira2009maximum}), an integrated likelihood (\citealt{Lancaster2002}), functional differencing (\citealt{bonhomme2012functional}), 
or from some other reasoning to eliminate the incidental parameters, for example, differencing 
in linear dynamic models (e.g., \citealt{arellano1991}). This approach is 
usually model-specific and is “fixed-$T$”, i.e., it seeks consistent estimation when $T$ is
fixed (and usually small). However, point-identifying moment conditions for small $T$ may 
not exist because point identification simply may fail; see \cite{HonoreTamer06} and 
\cite{chamberlain2010binary} 
for examples.

The other main approach is motivated by “large-$T$” arguments and seeks to reduce the large-$T$ bias
of the MLE or of the likelihood function itself or its score function
(e.g.,  \citealt{hahn2002asymptotically},
    \citealt{alvarez2003time},  \citealt{HahnNewey2004}, \citealt{ArellanoBonhomme2009}, \citealt{bonhomme2015grouped}, \citealt{DhaeneJochmans2015},
    \citealt{arellano2016likelihood},
    \citealt{FernandezValWeidner2016}). This approach is less
model-specific and may also be applied to models where point identification fails for small $T$.

The functional differencing method of  \cite{bonhomme2012functional} provides an algebraic approach 
to systematically find valid moment conditions in panel models with incidental parameters---if such moment
conditions exist. 
Related ideas are used in
\cite{Honore92}, \cite{Hu2002}, \cite{johnson2004identification}, 
\cite{kitazawa2013exploration},
\cite{honore2020moment}, \cite{honore2021dynamic}, and \cite{davezies2022fixed}.
In this paper, we extend the scope of functional differencing to models where point identification
may fail. In such models, exact functional differencing (as in \citealt{bonhomme2012functional})
is not possible,  but an approximate version thereof  yields moment conditions that 
are free of incidental parameters and that are approximately valid in the sense that their solution
yields a point close to the true common parameter value. Bonhomme’s method relies on the existence
of (one or more) zero eigenvalues of  a matrix of posterior predictive probabilities (or a posterior predictive density function) defined by the model. 
Our extension considers the case where all eigenvalues are positive and, therefore, point
identification fails, but where some eigenvalues are very close to zero. This occurs as the
number of support points of the outcome variable increases. Eigenvalues close to zero then lead
to approximate moment conditions obtained as a bias correction of an initially chosen
moment condition. The bias correction can be iterated, possibly infinitely many times. In 
point-identified models, the infinitely iterated bias correction is equivalent to functional
differencing. Therefore, approximate functional differencing can be viewed as finite-$T$ inference
in point-identified models, and as a large-$T$ iterative bias correction method
in models that are not point-identified.

The construction of approximate moment conditions is our main focus. Once such moment conditions are found, estimation follows easily using the (generalized) method of moments, and  the discussion of estimation is therefore deferred to later parts of the paper (from Section~\ref{sec:Estimation}).

We illustrate approximate functional differencing in a probit binary choice model. The implementation, 
including the iteration, is straightforward, only requiring elementary matrix operations. 
Indeed, one of the contributions of this paper is to show
how to iterate score-based bias correction methods for discrete choice
panel data relatively efficiently.

After introducing the model setup in Section~\ref{sec:Setup},
we review the main ideas behind functional differencing
in Section~\ref{sec:MainIdeas}. In Section~\ref{sec:AFD}
we introduce our novel bias corrections and explain how 
they relate to functional differencing.
 Section~\ref{sec:Numerical1} examines the eigenvalues of the matrix of posterior predictive probabilities in a numerical example. 
Section~\ref{sec:Estimation} briefly discusses estimation.
Further numerical illustration of the methods and some Monte
Carlo simulation results are  presented in Section~\ref{sec:Numerical2}. 
Section~\ref{sec:Extensions} discusses some extensions, in particular,
a generalization of the estimation method to average effects.
Finally, we provide some concluding remarks in Section~\ref{sec:conclusions}.

 \label{sec:Extensions}

\section{Setup}
\label{sec:Setup}

We observe outcomes $Y_i \in {\cal Y}$ and covariates $X_i \in {\cal X}$ for units $i=1,\ldots,n$. 
We only consider finite outcome sets $ {\cal Y}$ in this paper, but in principle all our results can be generalized to infinite   sets $ {\cal Y}$.
There are also  latent variables $A_i \in {\cal A}$, which are treated as nuisance parameters. We assume
that $(Y_i, X_i, A_i)$, $i=1,\ldots,n$, are independent and identical draws from a distribution with conditional outcome probabilities
\begin{align}
   {\rm Pr}\left( Y_i= y_i \, \big| \, X_i=x_i, \, A_i = \alpha_i \right) 
 =  f\left(y_i \, \big| \, x_i, \alpha_i, \theta_0 \right) ,
    \label{model}
\end{align}
where the function $ f\left(y_i \, \big| \, x_i, \alpha_i, \theta \right)$ is known (this function specifies ``the model''), but the true parameter value $\theta_0 \in \Theta \subset \mathbb{R}^{d_\theta}$ is unknown. Our primary goal 
in this paper is inference on $\theta_0$.

Let $\pi_0(\alpha_i \,| \,x_i)$ be the true distribution of $A_i$ conditional on $X_i=x_i$. Then,
the conditional distribution that can be identified from the data is
\begin{align}
   {\rm Pr}\left( Y_i= y_i \, \big| \, X_i=x_i  \right) 
 = \int_{\cal A} \,  f\left(y_i \, \big| \, x_i, \alpha_i, \theta_0 \right) \, \pi_0(\alpha_i \,| \,x_i) \, {\rm d} \alpha_i \, .
     \label{model2}
\end{align}
No restrictions are imposed on $\pi_0(\alpha_i \,| \,x_i)$ nor on the marginal distribution of  $X_i$, that is, we have a semi-parametric model with  unknown parametric component $\theta_0$
and unknown nonparametric component  $\pi_0(\alpha_i \,| \,x_i)$.

The setup just described covers many nonlinear panel data models with fixed effects. 
There, we observe outcomes $Y_{it}$ and covariates $X_{it}$ for unit $i$ over time periods
$t=1,\ldots,T$. For static panel models we then set $Y_i = (Y_{i1},\ldots,Y_{iT})$ and
$X_i = (X_{i1},\ldots,X_{iT})$, and the model is typically specified as
$$
f\left(y_i \, \big| \, x_i, \alpha_i, \theta \right)
  =  \prod_{t=1}^T f_*\left(y_{it} \, \big| \, x_{it}, \alpha_i, \theta \right) ,
$$
where $f_*\left(y_{it} \, \big| \, x_{it}, \alpha_i, \theta_0 \right)
 =  {\rm Pr}\left( Y_{it}= y_{it} \, \big| \, X_{it}=x_{it}, \, A_i = \alpha_i \right).$
Here,  $f_*\left(y_{it} \, \big| \, x_{it}, \alpha_i, \theta \right)$ often depends on
$ x_{it}$, $\alpha_i$, $\theta$ only through a single index
$x_{it}' \theta + \alpha_i$, where $\theta$ is a regression coefficient vector of the same dimension
as $x_{it}$, and $\alpha_i \in \mathbb{R}$ is an individual-specific fixed effect.
Of course, $\theta$ may also contain additional parameters (e.g., the variance of the error
term in a Tobit model).

For dynamic nonlinear panel models, we usually have to model the dynamics explicitly.
For example, we may include a lagged dependent variable in the model. In that case, 
assuming that $Y_{it}$ at $t=0$ is observed, we have
$Y_i = (Y_{i1},\ldots,Y_{iT})$ and
$X_i = (Y_{i0},X_{i1},\ldots,X_{iT})$, and the model is usually specified as
$$
f\left(y_i \, \big| \, x_i, \alpha_i, \theta \right)
  =  \prod_{t=1}^T f_*\left(y_{it} \, \big| \,y_{i,t-1}, x_{it}, \alpha_i, \theta \right) ,
$$
where $ f_*\left(y_{it} \, \big| \,y_{i,t-1}, x_{it}, \alpha_i, \theta \right)
 =  {\rm Pr}\left( Y_{it}= y_{it} \, \big| \, Y_{i,t-1}= y_{i,t-1} , \, X_{it}=x_{it}, \, A_i = \alpha_i \right)$.
Here, the initial observation $Y_{i0}$ is included in the conditioning variable $X_i$. 
In this way, the setup in equations \eqref{model} and \eqref{model2} also
covers dynamic panel data models.

The setup may also be relevant for applications outside of standard panel data,
e.g., pseudo-panels, network models, or games. But one typically needs
$Y_i$ to be a vector of more than one outcome to learn anything about $\theta_0$ since, in most models, the value of $\alpha_i$ alone can fully fit any possible outcome value if there is only a single outcome per unit  (i.e., if the sample is purely cross-sectional).

The main insights of our paper are therefore applicable more broadly, but
our focus will be on panel data. In particular, 
the following static binary choice panel data model will be our running
example throughout the  paper.

\renewcommand{\theexample}{1\Alph{example}}
\begin{example}[\bf Static binary choice panel data model]
     \label{ex:StaticPanel}
     Consider a static panel data model with $Y_i = (Y_{i1},\ldots,Y_{iT})$ and $X_i = (X_{i1},\ldots,X_{iT})$ where the outcomes $Y_{it} \in \{0,1\}$ are generated by
     $$
          Y_{it} = \mathbbm{1}(X'_{it} \, \theta_0 + A_i  \geq U_{it}) 
     $$
     and the errors $U_{it}$ are independent of $X_i$ and $A_i \in \mathbb{R}$, and are i.i.d.\ across $i$ and $t$ with cdf $F(u)$. 
     This implies
    $$
     f\left(y_i \, \big| \, x_i, \alpha_i, \theta \right) = \prod_{t=1}^T  \, [1-F(x'_{it} \,\theta + \alpha_i)]^{1-y_{it}} \,  [F(x'_{it} \, \theta + \alpha_i)]^{y_{it}} .
    $$
    For the probit model, we have $F(u)=\Phi(u)$, where $\Phi$ is the standard normal cdf, and for the logistic model we have
    $F(u)=   (1+e^{-u})^{-1}$. 
    To make the example even more specific, we consider a single binary covariate $X_{it} \in \{0,1\}$ such that for all $i=1,\ldots,n$ we have
    \begin{align*}
        X_{it} = \mathbbm{1}( t > T_0 ) \, ,
    \end{align*}
    for some $T_0 \in \{1,\ldots,T-1\}$, that is, $ X_{it}$ is equal to zero 
    for the initial $T_0$ time periods, and is equal to one for the remaining
     $T_1 = T-T_0$ time periods.
     Here $T_0$, and therefore $X_i$, is non-random and constant across~$i$, so
     we can simply write 
$f\left(y_i \, \big| \, \alpha_i, \theta \right)$ instead of
$ f\left(y_i \, \big| \, x_i, \alpha_i, \theta \right)$.
     The parameter of interest, $\theta_0 \in \mathbb{R}$,
    is one-dimensional.
\end{example}

\begin{example}[\bf Example~\ref{ex:StaticPanel} reframed]
     \label{ex:StaticPanel2}
     Consider Example~\ref{ex:StaticPanel}, but denote the binary outcomes now as $Y_{it}^* \in \{0,1\}$ and define the outcome $Y_i$ for unit $i$ as the pair
     \begin{align}
         Y_i  = \left( Y_{i,0} , Y_{i,1} \right) :=  \left(  \sum_{t=1}^{T_0}  Y^*_{it}  , \,  \sum_{t=T_0+1}^{T}  Y^*_{it}  \right) \in  \{ 0 ,\ldots,T_0 \} \times \{0, \ldots,T_1\} = {\cal Y}.
         \label{MappingYtoY}
     \end{align}
     Here, $Y_{i,0} = \sum_{t=1}^T Y^*_{it} \, (1-X_{it})$ is  the number of outcomes for unit $i$
     for which $Y_{it}^*=1$ within those time periods that have $X_{it}=0$, while 
     $Y_{i,1} = \sum_{t=1}^T Y^*_{it} \, X_{it}$ is  the number of outcomes with $Y_{it}^*=1$ for the time periods with $X_{it}=1$.
     This implies that
     $$
          f\left(y_i \, \big| \, \alpha_i, \theta \right)  =   { T_0 \choose y_{i,0}  } 
           [1-F(\alpha_i)]^{T_0 - y_{i,0} } \,  [F( \alpha_i)]^{y_{i,0} }  { T_1 \choose y_{i,1}  }      [1-F(\theta+\alpha_i)]^{T_1 - y_{i,1} } \,  [F(\theta+ \alpha_i)]^{y_{i,1} },
     $$
     where we drop $x_i$ from $f\left(y_i \, \big| \, x_i, \alpha_i, \theta \right)$  
     since it is non-random and constant across $i$.
     The parameter of interest, $\theta_0 \in \mathbb{R}$, is unchanged.

\end{example}
\renewcommand{\theexample}{\arabic{example}}

From the perspective of parameter estimation, Example~\ref{ex:StaticPanel2} is 
completely equivalent to Example~\ref{ex:StaticPanel}, because
 $Y_i$ in Example~\ref{ex:StaticPanel2} is a minimal sufficient statistic for the parameters $(\theta_0,\alpha_i)$ in
Example~\ref{ex:StaticPanel}.
Nevertheless, the outcome space in Example~\ref{ex:StaticPanel} is larger
($|{\cal Y}|=2^T$) than the  outcome space in Example~\ref{ex:StaticPanel2}
($|{\cal Y}|=(T_0+1)(T_1+1)$), and this will make a difference in our discussion
of moment conditions in these two examples below.

\section{Main idea behind functional differencing}
\label{sec:MainIdeas}

We now explain the main idea behind the
functional differencing method of \cite{bonhomme2012functional}. 
Our presentation is similar to that in \cite{honore2020moment}. However, our
goal here is much closer to that in Bonhomme's original paper
because we want to describe a general
estimation method, one that is applicable to a very large class of models, as opposed
to obtaining an analytical expression for moment conditions in specific models.

\subsection{Exact moment conditions}
\label{subsect:ExactMoments}

Consider the model described  by \eqref{model} and \eqref{model2}, where our goal is
to estimate $\theta_0$.
Functional differencing (\citealt{bonhomme2012functional}) aims to find moment functions ${\mathfrak m}(y_i,x_i,\theta) \in \mathbb{R}^{d_{\mathfrak m}}$ such that the model implies,
for all $x_i$ and $\alpha_i$, that
\begin{align}
     \mathbb{E}\left[ {\mathfrak m}(Y_i , X_i, \theta_0) \, \big| \, X_i=x_i, \, A_i = \alpha_i  \right] = 0 
     \label{moment}
\end{align}
or, equivalently,
\begin{align*}
     \sum_{y \in {\cal Y}} \, {\mathfrak m}(y , x_i, \theta) \,  f\left(y \, \big| \, x_i, \alpha_i, \theta \right)   = 0,
\end{align*}
since we want \eqref{moment} to hold for all possible $\theta_0 \in \Theta$.
Verifying that ${\mathfrak m}(y_i,x_i,\theta)$ satisfies this conditional moment condition only requires knowledge of the model $ f\left(y_i \, \big| \, x_i, \alpha_i, \theta \right) $,
not of the observed data. Note that 
${\mathfrak m}(y_i,x_i,\theta)$ does not depend on $\alpha_i$,
but nevertheless should have zero mean conditional on any realization
  $ A_i = \alpha_i$. This is a strong requirement, and we will get 
back to this below.

Once we have found such  valid moment functions ${\mathfrak m}(y_i,x_i,\theta)$, we can choose an arbitrary (matrix-valued) function $g(x_i,\theta) \in \mathbb{R}^{d_m \times d_{\mathfrak m}}$, and
define $$m(y_i,x_i,\theta) := g(x_i,\theta) \, {\mathfrak m}(y_i , x_i, \theta),$$
which is a vector of dimension $d_m$.
By the law of iterated expectations, we then obtain, under weak
regularity conditions,
the unconditional moment condition
\begin{align}
    \mathbb{E}\left[ m(Y_i,X_i,\theta_0)  \right] = 0,
     \label{moment2}
\end{align}
which we can use to estimate $\theta_0$ by the generalized method of moments (GMM, \citealt{hansen1982large}). The nuisance parameters $\alpha_i$ do not feature in the GMM estimation at all, that is, 
functional differencing provides a solution to the incidental
parameter problem (\citealt{neyman1948consistent}).

Of course, the key condition for consistent GMM estimation is that 
 $ \mathbb{E}\left[ m(Y_i,X_i,\theta)  \right] \neq 0$ for any $\theta \neq \theta_0$.
This identification condition is violated if ${\mathfrak m}(y_i , x_i, \theta)$ does
not depend on $\theta$ (a special case of which is 
${\mathfrak m}(y_i , x_i, \theta) = 0$, which is a trivial solution to \eqref{moment}).
Hence the moment functions must depend on $\theta$ to be informative about $\theta_0$.

\subsubsection*{Uninformative moment functions in Example~\ref{ex:StaticPanel}}

To give an example of a moment function that is uninformative about $\theta_0$, consider
Example~\ref{ex:StaticPanel}. 
Let $t$ and $s$ be two time periods where $X_{it}=X_{is}$.
Let $Y_{i,-(t,s)} \in \{0,1\}^{T-2}$ be the outcome vector $Y_{i}$ from which the outcomes $Y_{it}$  and $Y_{is}$ are dropped. Then, since $X_{it}=X_{is}$, the  
outcomes $Y_{it}$ and $Y_{is}$ are exchangeable and therefore
\begin{align*}
    \mathbb{E}\left( Y_{it} \, \big| \, Y_{i,-(t,s)} \right)
    =  \mathbb{E}\left( Y_{is} \, \big| \, Y_{i,-(t,s)} \right).
\end{align*}
This implies that for any function $g:  \{0,1\}^{T-2} \rightarrow \mathbb{R}$
the moment function
\begin{align}
     {\mathfrak m}(y_i,x_i,\theta) 
       := \left( y_{it} - y_{is} \right)  g(y_{i,-(t,s)})
     \label{uninformativeMoments}
\end{align}     
satisfies \eqref{moment}. This moment function does not depend on
$\theta$ and is therefore not useful for parameter estimation. (It is useful
for model specification testing, but we will not discuss this.)

Furthermore, one can show that every moment function
${\mathfrak m}(y_i,x_i,\theta)$ that satisfies \eqref{moment} in Example~\ref{ex:StaticPanel} is equal to a corresponding valid
moment function in Example~\ref{ex:StaticPanel2} plus a linear combination of moment functions
of the form \eqref{uninformativeMoments}.\footnote{
Let $y_i=y(y^*_i)$ be the mapping between an outcome $y^*_i$
in Example~\ref{ex:StaticPanel} and an outcome $y_i$ in Example~\ref{ex:StaticPanel2}, as defined by
\eqref{MappingYtoY}, and let $ {\cal Y}^*(y_i) = \left\{ y_i^* \, : \, y_i=y(y^*_i) \right\}$
be the set of outcomes $y^*_i$ that map to $y_i$.
Starting from a valid moment
function ${\mathfrak m}_A(y^*_i, \theta)$ in Example~\ref{ex:StaticPanel}
we obtain a valid moment function in Example~\ref{ex:StaticPanel2}
as ${\mathfrak m}_B(y_i, \theta) 
= \left| {\cal Y}^*(y_i) \right|^{-1} \sum_{y_i^* \in {\cal Y}^*(y_i)} {\mathfrak m}_A(y_i^* ,x_i, \theta)$. The null space of this linear mapping ${\mathfrak m}_A \mapsto {\mathfrak m}_B$
is spanned by moment functions of the form \eqref{uninformativeMoments}. 
This implies that the difference ${\mathfrak m}_A(y^*_i, \theta) - {\mathfrak m}_B(y(y^*_i)
  , \theta)$ is a linear combination of moment functions of the form \eqref{uninformativeMoments}.
}
Thus, from the perspective of constructing valid moment functions that
are  informative about $\theta_0$,  without loss of generality
we can focus on Example~\ref{ex:StaticPanel2} instead of 
Example~\ref{ex:StaticPanel}.
Example~\ref{ex:StaticPanel} is useful because it is a completely standard
panel model and it gives a simple example of valid moment functions 
that do not depend on $\theta$. From here onward, however, we will always use
Example~\ref{ex:StaticPanel2} as our running example.

\subsubsection*{Informative moment functions in Example~\ref{ex:StaticPanel2} for logistic errors}

Consider Example~\ref{ex:StaticPanel2} with logistic error distribution,
$F(u)=   (1+e^{-u})^{-1}$.
Then,  $Y_{i,0} + Y_{i,1}$ is a sufficient statistic for $A_i$, so
the distribution of $Y_i$ conditional on  $Y_{i,0} + Y_{i,1}$ does not depend
on $A_i$. It is well known that this implies that
the corresponding conditional MLE of $\theta_0$ is 
consistent as $n\to\infty$, for any fixed $T\ge 2$; see, e.g., \cite{rasch1960studies},
\cite{andersen1970asymptotic}, and
\cite{chamberlain1980analysis}.

Here, instead of considering 
conditional maximum likelihood, we focus purely on the existence of moment
conditions. Let $\bar y = (\bar y_0,\bar y_1) \in  \{ 0 ,\ldots,T_0 \} \times \{0, \ldots,T_1\}$
and $\tilde y = (\tilde y_0,\tilde y_1) \in  \{ 0 ,\ldots,T_0 \} \times \{0, \ldots,T_1\}$
be two possible realizations of $Y_i$ such that
 $\bar y_0+\bar y_1 = \tilde y_0 + \tilde y_1$ and  $\bar y \neq \tilde y$.
Since $Y_{i,0} + Y_{i,1}$ is a sufficient statistic for $A_i$, it must be the case
that the ratio  
$$
    r(\theta ) :=
   \frac{ f\left(\bar y \, \big| \, \alpha_i, \theta \right) }
   { f\left(\tilde y \, \big| \, \alpha_i, \theta \right) } 
$$
does not depend on $\alpha_i$. This implies that
\begin{align}
    {\mathfrak m}(y_i ,  \theta)
    := \mathbbm{1}\left\{ y_i = \bar y \right\}
    -  r(\theta ) \; \mathbbm{1}\left\{ y_i = \tilde y \right\}  
    \label{MomentFctSufficient}
\end{align}
satisfies  $ \mathbb{E}\left[ {\mathfrak m}(Y_i , \theta_0) \, \big| \, A_i=\alpha_i \right] = 0$.
A short calculation gives
\begin{align*}
    r(\theta )
    =   { T_0 \choose \bar y_{0}  }   { T_1 \choose \bar y_{1}  }  
    { T_0 \choose \tilde y_{0}  }^{-1}   { T_1 \choose \tilde y_{1}  }^{-1}
           \exp[(\bar y_1 - \tilde y_1)  \theta]  .
\end{align*}
Since we assume   $\bar y  \neq \tilde y$, 
the moment function $ {\mathfrak m}(y_i ,  \theta)$ indeed depends on $\theta$.
Furthermore, $ {\mathfrak m}(y_i ,  \theta)$ is strictly monotone in $\theta$
when $y_i=\Tilde{y}$ and constant in $\theta$ otherwise, and all outcomes 
are realized with positive probability. Hence
$\mathbb{E}\left[ {\mathfrak m}(Y_i,\theta)  \right] $
is strictly monotone in $\theta$ and  the condition
$ \mathbb{E}\left[ {\mathfrak m}(Y_i,\theta_0)  \right] = 0$
  uniquely identifies $\theta_0$.
  
The observation that the existence of a 
sufficient statistic for the nuisance parameter $A_i$ allows for identification
and estimation of $\theta_0$ is quite old (e.g., \citealt{rasch1960studies}).
However, the reason that the functional differencing method is truly powerful 
is that moment functions satisfying \eqref{moment} and identifying $\theta_0$ 
may exist even in models
where no sufficient statistic for $A_i$ is available. 
Examples of this are given by  
\cite{Honore92}, \cite{Hu2002}, \cite{johnson2004identification}, 
\cite{kitazawa2013exploration},
\cite{honore2020moment}, \cite{honore2021dynamic}, and \cite{davezies2022fixed}.
\cite{bonhomme2012functional} provides a computational method  for 
 obtaining moment functions ${\mathfrak m}(y_i,x_i,\theta) $ such that \eqref{moment} holds
 in a large class of models,
 while  \cite{honore2020moment} 
 discuss how to  obtain explicit algebraic formulas for moment conditions in specific models.
 \cite{dobronyi2021identification} show that additional 
 moment inequalities may exist that contain identifying information
 on $\theta_0$ that is not contained in the moment equalities.

Our example of a moment function in \eqref{MomentFctSufficient} is convenient and easy to understand, but it is not really
representative of the potential complexity of more general moment functions. The papers cited in the previous paragraph give a better view of the true capability of the functional differencing method in more challenging settings.

\subsection{Approximate moment conditions}

Functional differencing is a very powerful and useful method.
Nevertheless, there are many models to which it is not applicable.
The reason is that the condition in equation~\eqref{moment} is actually quite strong.
It requires us to find a function 
${\mathfrak m}(Y_i , X_i, \theta)$ that does not depend on $A_i$ at all,
but that is supposed to have a conditional mean of zero for any possible realization
of $A_i$. In most standard panel data models $A_i$ takes values in $\mathbb{R}$
($A_i$ can also be a vector), implying that
\eqref{moment} imposes an infinite number of linear restrictions. 

It is therefore perhaps unsurprising that there are many panel data models
for which \eqref{moment} has no non-trivial solution at all.
In Example~\ref{ex:StaticPanel2} we have shown the existence of
valid moment functions for the logit model,
 but it turns out that no valid moment function exists for
the probit model when $\theta_0 \neq 0$ (we have verified this non-existence numerically
for many values of $T$ and $T_0$).

Instead of trying to find moment functions satisfying \eqref{moment}, and hence \eqref{moment2},  exactly,
we  argue that it can also be fruitful to
search for moment functions that satisfy these conditions only approximately, i.e.,
\begin{align}
    \mathbb{E}\left[ m(Y_i,X_i,\theta_0)  \right] \approx 0 \, .
     \label{moment2approx}
\end{align}
For a given model $ f\left(y_i \, \big| \, x_i, \alpha_i, \theta \right) $
we might not be able to find
an exact solution to  \eqref{moment2}, but we might  be able to find 
a very good approximate solution.

Examples of approximate moment conditions are 
provided by the ``large-$T$'' panel data literature, which considers
asymptotic sequences where also $T \rightarrow \infty$
(jointly with $n \rightarrow \infty$). 
To illustrate the insights of this literature,
let $\widehat \alpha_i(\theta)$
be the MLE of $\alpha_i$ obtained
from  maximizing   $ f\left(Y_i \, \big| \, X_i, \alpha_i, \theta \right)$
over $\alpha_i\in\cal{A}$, and let
$\psi(y_i,x_i,\alpha_i,\theta)$ be a moment function that
satisfies $\mathbb{E}[\psi(Y_i,X_i,A_i,\theta_0)]=0$
in model \eqref{model}, e.g.,
$\psi(y_i,x_i,\alpha_i,\theta) =  \nabla_{\theta}
\left[\frac 1 T
\log  f\left(y_i \, \big| \, x_i, \alpha_i, \theta \right) \right]$.
Then, under standard regularity conditions, 
$\widehat \alpha_i(\theta_0)$ is consistent for $A_i$ as $T \rightarrow \infty$,
and a useful approximate moment function is therefore given by
$
    m(y_i,x_i,\theta) = 
    \psi(y_i,x_i,\widehat \alpha_i(\theta),\theta)
$.
In this example, the vague approximate statement in \eqref{moment2approx} can be made precise, namely one can show
that, as $T \rightarrow \infty$,
\begin{align}
    \mathbb{E}\left[ m(Y_i,X_i,\theta_0)  \right]  = O(T^{-1}) \, ,
         \label{moment2approxLargeT}
\end{align}
implying that the estimator of $\theta_0$ obtained from 
this approximate moment condition also has a bias of order $T^{-1}$.
It is possible to correct this bias
and obtain moment functions where the right hand
side of \eqref{moment2approxLargeT} is of order 
$T^{-2}$ or smaller, implying an even smaller bias for
estimates of $\theta_0$ when $T$ is sufficiently large;
see, e.g., \cite{HahnNewey2004},
\cite{arellano2007understanding,arellano2016likelihood},
\cite{ArellanoBonhomme2009},  and \cite{DhaeneJochmans2015}.
In this paper, we aim for even higher-order bias correction, where the remaining
bias is only of order $T^{-q}$, for some integer $q>0$, because we expect
better small-$T$ estimation properties from such higher-order corrections,
and it allows us to connect the bias correction results with the functional differencing method.
However, by correcting the bias in this way, one might very well increase the variance of the resulting estimator for $\theta_0$, as we will
see in our Monte Carlo simulations in Section~\ref{sec:Numerical2}.
The question of how to optimally trade off the bias vs.\ the variance  (using, e.g., a mean squared error
criterion function, as in \citealt{bonhomme2022minimizing}) is interesting and could lead to an optimal choice of the bias correction order $q$,
but we will not formalize this in the current paper.

\section{Approximate functional differencing}    
\label{sec:AFD}

In this section, we answer the following questions:
In a model where exactly valid moment functions as in \eqref{moment} do not exist,
is it still possible to construct useful moment functions $m(y_i,x_i,\theta)$
that are approximately valid as in \eqref{moment2approx}? And if yes, how can
we construct such moment functions in a principled way?

\subsection{Notation and preliminaries}   

Our discussion in this section is at the ``population level'', that is,
for one representative unit $i$. 
In the following, we therefore drop the index $i$ throughout. For example, 
instead of $Y_i$, $X_i$, $A_i$ we simply write $Y$, $X$, $A$.

\subsubsection{Prior distribution of the fixed effects}
Let $ \pi_{\rm prior}(\alpha\,|\,x) $ be a chosen prior distribution for $A$, conditional on $X=x$.
The prior should integrate to one, that is, $\int_{\cal A}   \, \pi_{\rm prior}(\alpha \,| \, x) \, {\rm d} \alpha = 1$, for all $x \in {\cal X}$, but
 we do not require $\pi_{\rm prior}$ to be identical to $\pi_0$.
We require non-zero prior probability for all points in the support of $A$, i.e.,
\begin{align}
     \pi_{\rm prior}\left(\alpha \, \,| \,x \right) > 0 , \qquad \text{for all   $\alpha \in {\cal A}$  and   $x \in {\cal X}$.}
     \label{ConditionPrior}
\end{align}
The prior does not need to depend on $x$; we may choose
$\pi_{\rm prior}\left(\alpha \, \,| \,x \right) = \pi_{\rm prior}(\alpha)$, which may be easier to specify in practice, but we 
allow for general priors $\pi_{\rm prior}\left(\alpha \, \,| \,x \right)$ in the following.

\subsubsection{Posterior distribution of the fixed effects}
 Given the chosen prior $\pi_{\rm prior}$, the posterior distribution of $A$, conditional on $Y=y$ and $X=x$, for given $\theta$, is
\begin{align}
   \pi_{\rm post}(\alpha \,|\, y,x,\theta) 
     = \frac{ f(y \, | \,x,\alpha,\theta) \, \pi_{\rm prior}(\alpha\,|\,x)} {p_{\rm prior}(y \, | \, x,\theta)} \, ,
     \label{DefPost}
\end{align}     
where $p_{\rm prior}(y \, |\, x,\theta) = \int_{\cal A}  f( y \,  | \,x,\theta,  \alpha) \, \pi_{\rm prior}( \alpha\,|\,x) \, {\rm d}  \alpha$ is the  prior predictive probability of outcome $y$.

\subsubsection {Score function}
Let $s \, : \, {\cal Y} \times {\cal X} \times \theta \rightarrow   \mathbb{R}^{d_s}$ be some function, which we will call the ``score function''.
In our numerical illustrations in Section~\ref{sec:Numerical2} we 
 choose the integrated score
\begin{align}
    s(y,x,\theta) &=   \nabla_\theta  \left[  \log \int_{\cal A} f(y \,|\, x, \alpha,\theta) \pi_{\rm prior}(\alpha\,|\,x) \, {\rm d} \alpha \right] 
   \nonumber \\
      &=  \int_{\cal A} \left[ \nabla_\theta \, \log f(y \,|\, x, \alpha,\theta) \right]  \pi_{\rm post}(\alpha \,|\, y,x,\theta)  \, {\rm d} \alpha \, ,
      \label{IntegratedScore}
\end{align}
where $d_s=d_\theta$.
However, for our construction of moment functions 
in the following subsection, which is 
based on a chosen score function, we can actually choose almost any function $s(y,x,\theta)$,
as long as it is differentiable in $\theta$
and not identically zero.
For example, the profile score 
 $s(y,x,\theta) =  \nabla_\theta \left[ \max_{\alpha\in\cal{A}} \log f(y | x, \alpha,\theta) \right] $
would  be an equally natural choice.

Whatever the choice of $s(y,x,\theta)$, we now rewrite it using matrix notation.
Let $n_{\cal Y} = | {\cal Y}|$ be the number of possible outcomes, and label 
the elements of the outcome set as $y_{(k)}$, $k=1,\ldots,n_{\cal Y}$,
so that ${\cal Y}=\left\{ y_{(1)},\ldots,y_{(n_{\cal Y})} \right\}$.
For $y \in {\cal Y}$, let $\delta(y) = (\delta_1(y),\ldots,\delta_{n_{\cal Y}}(y))'$ be the 
$n_{\cal Y}$-vector with   elements  $\delta_k(y) = \mathbbm{1}(y = y_{(k)})$, $k=1,\ldots,n_{\cal Y}$,
where $ \mathbbm{1}(\cdot)$ is the indicator function.
Recall that $s(y,x,\theta) $ is a $d_s$-vector.
Let $S(x,\theta)$  be the corresponding  $d_s  \times n_{\cal Y}$ matrix with columns $s(y_{(k)},x,\theta)$,
$k=1,\ldots,n_{\cal Y}$.  Now we can write $s(y,x,\theta)$ as
\begin{align}
    s(y,x,\theta) = S(x,\theta) \, \delta(y) \, , \qquad \textrm{for all }y\in\cal{Y} \, .
\end{align}

\subsubsection {Posterior predictive probability matrix}  
 Given $x \in {\cal X}$ and $\theta \in \Theta$, after observing some 
$y \in {\cal Y}$, the posterior predictive probability of observing any
``future" $\widetilde y\in {\cal Y}$ is
\begin{align}
    Q( \widetilde y \, | \, y,x,\theta) = \int_{\cal A} f \left(\widetilde y \, \big| \, x,\alpha,\theta \right)  \pi_{\rm post} \left( \alpha \, \big| \,  y,x,\theta \right)  {\rm d} \alpha \, .
    \label{DefQ}
\end{align}
Let $Q(x,\theta)$ be the $n_{\cal Y} \times n_{\cal Y}$ matrix with elements $Q_{k,\ell}(x,\theta)  = Q(y_{(k)} \, | \, y_{(\ell)},x,\theta)$. The following lemma
states some  properties of the matrix $Q(x,\theta) $ that will be useful later.

\begin{lemma}
    \label{lemma:MatrixQ}
     Let $x \in {\cal X}$ and $\theta \in \Theta$.
     Assume that $p_{\rm prior}(y \, |\, x,\theta) > 0$ for all $y \in {\cal Y}$.
Then $Q(x,\theta)$ is
diagonalizable and all its eigenvalues are  real numbers in the interval $[0,1]$.
\end{lemma}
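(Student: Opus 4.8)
The plan is to expose a hidden symmetry in $Q(x,\theta)$. Although $Q(x,\theta)$ is a matrix of predictive probabilities and is manifestly non-symmetric, I would first rewrite it as the product of a symmetric positive semidefinite matrix and a positive diagonal matrix, and then argue that any such product is similar to a symmetric matrix. That single observation delivers simultaneously the reality of the eigenvalues and the diagonalizability, leaving only the numerical bounds $[0,1]$ to be pinned down.

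Concretely, fix $x$ and $\theta$ and abbreviate $f_k(\alpha) = f(y_{(k)}\,|\,x,\alpha,\theta)$, $\pi(\alpha)=\pi_{\rm prior}(\alpha\,|\,x)$, and $p_\ell = p_{\rm prior}(y_{(\ell)}\,|\,x,\theta)$. Substituting the definition \eqref{DefPost} of $\pi_{\rm post}$ into \eqref{DefQ}, the entries become $Q_{k,\ell} = p_\ell^{-1} \int_{\cal A} f_k(\alpha) f_\ell(\alpha) \pi(\alpha)\, {\rm d}\alpha$. The first key step is then to introduce the Gram matrix $M$ with entries $M_{k,\ell} = \int_{\cal A} f_k f_\ell \, \pi \, {\rm d}\alpha = \langle f_k, f_\ell\rangle$, the inner product in $L^2(\pi)$, together with $D = \diag(p_1,\ldots,p_{n_{\cal Y}})$, so that $Q = M D^{-1}$. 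The matrix $M$ is symmetric and positive semidefinite simply because it is a Gram matrix, and the maintained assumption $p_\ell>0$ makes $D$ strictly positive on the diagonal, so that $D^{1/2}$ and $D^{-1/2}$ are well defined.

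The second step is the similarity transformation. I would form $\tilde Q := D^{-1/2} M D^{-1/2} = D^{-1/2} Q D^{1/2}$, which is symmetric because $M$ is. Hence $Q$ is similar to the real symmetric matrix $\tilde Q$, so all eigenvalues of $Q$ are real, and $Q$ is diagonalizable: writing $\tilde Q = O \Lambda O'$ with $O$ orthogonal gives $Q = (D^{1/2}O)\,\Lambda\,(D^{1/2}O)^{-1}$. For the lower bound, I note that $\tilde Q$ is a congruence of the PSD matrix $M$, since $v'\tilde Q v = (D^{-1/2}v)' M (D^{-1/2}v)\ge 0$, so $\tilde Q$, and therefore $Q$, has only nonnegative eigenvalues.

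It remains to bound the eigenvalues above by $1$, and this is the one place where the Gram-matrix symmetry is insufficient and one must invoke the probabilistic normalization. The key observation is that $Q$ is column-stochastic: its entries are nonnegative, and summing a column over $k$ gives $\sum_k Q_{k,\ell} = p_\ell^{-1}\int_{\cal A} (\sum_k f_k) f_\ell \, \pi \, {\rm d}\alpha = p_\ell^{-1}\int_{\cal A} f_\ell \, \pi \, {\rm d}\alpha = 1$, using $\sum_k f_k(\alpha)=1$. Thus $Q'$ is row-stochastic with operator $\infty$-norm equal to $1$, so every eigenvalue of $Q'$, equivalently of $Q$, has modulus at most $1$; combined with reality and nonnegativity this yields eigenvalues in $[0,1]$. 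The main obstacle here is conceptual rather than technical: recognizing that the non-symmetric $Q$ carries a symmetric Gram structure under the diagonal rescaling by $D^{\pm 1/2}$. Once that factorization is in hand, the two bounds follow from two distinct but elementary sources, positive semidefiniteness for the lower bound and stochasticity for the upper bound.
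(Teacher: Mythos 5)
Your proposal is correct and follows essentially the same route as the paper: your $\tilde Q = D^{-1/2} M D^{-1/2}$ is exactly the paper's symmetric matrix $\overline Q(x,\theta)$, the similarity $Q = D^{1/2}\tilde Q D^{-1/2}$ is the paper's similarity transformation, and both arguments close the upper bound via stochasticity of $Q$. The only cosmetic difference is that you make the Gram-matrix factorization $Q = MD^{-1}$ and the column-sum computation explicit where the paper asserts these properties ``by construction.''
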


The proof of the lemma is given in the appendix.

\subsection{Bias-corrected score functions}   

We consider the chosen score function $ s(y,x,\theta) $ as our first candidate for a moment function $m(y,x,\theta)$ to achieve \eqref{moment2approx}. However, 
 $\mathbb{E}   [s(Y,X,\theta_0)]$ need neither be zero nor close to zero.
(As discussed above, there are choices of $ s(y,x,\theta) $ such that
$\mathbb{E}   [s(Y,X,\theta_0)]$ is close to zero for large $T$, but even
for those choices $\mathbb{E}   [s(Y,X,\theta_0)]$ need not be close to zero for small $T$.)
  
Therefore, we aim to ``bias-correct'' the score by defining an improved score
$$s^{(1)}(y,x,\theta) := s(y,x,\theta) -  b(y,x,\theta) \, ,$$ for some correction function $b(y,x,\theta)$. The goal is to choose $b(y,x,\theta)$ such that 
the elements of
 $\mathbb{E}   [s^{(1)}(Y,X,\theta_0)] $
are smaller than those of
 $\mathbb{E}   [s(Y,X,\theta_0)]$.
According to the model we have
\begin{align}
     \mathbb{E}\left[ s(Y,X,\theta_0) \, \big| \, X=x, \, A=\alpha \right]
     &= \sum_{y \in {\cal Y}}  s(y,x,\theta_0)  \,  f\left(y \, \big| \, x, \alpha, \theta_0 \right)  \, ,
  \nonumber \\
     \mathbb{E}\left[ s(Y,X,\theta_0) \, \big| \, X=x \right]
     &= \sum_{y \in {\cal Y}}  s(y,x,\theta_0)  \, \int_{\cal A} \,   f\left(y \, \big| \, x, \alpha, \theta_0 \right) \, \pi_0(\alpha \,| \,x) \, {\rm d}\alpha \, .
     \label{InfeasibleBiasCorrection}
\end{align}
We  would achieve
exact bias correction (i.e., $\mathbb{E}   [s^{(1)}(Y,X,\theta_0)] =0$) 
if we could choose $ b(y,x,\theta)$
(or its conditional expectation) equal to
$  \mathbb{E}[ s(Y,X,\theta_0) \, \big| \, X=x, \, A=\alpha  ]$
or equal to $   \mathbb{E}[ s(Y,X,\theta_0) \, \big| \, X=x ]$.
The first option is infeasible because $A$ is unobserved. The second option is infeasible
because $\pi_0(\alpha \,| \,x)$ is unknown.

However, even though $A$ is unobserved, the posterior distribution
$ \pi_{\rm post}(\alpha \,|\, y,x,\theta) $ should contain some useful information
about $A$. Inspired by \eqref{InfeasibleBiasCorrection} we
suggest choosing
$$
b(y,x,\theta) = \sum_{\widetilde y \in {\cal Y}}  s(\widetilde y,x,\theta)  \, \int_{\cal A} \,   f\left(\widetilde y \, \big| \, x, \alpha, \theta \right) \,
      \pi_{\rm post}(\alpha  \, \big| \, y,x,\theta)   \, {\rm d} \alpha \, ,
$$
where we have replaced $\pi_0(\alpha \,| \,x)$ by $  \pi_{\rm post}(\alpha \,|\, y,x,\theta) $.
 This gives the bias-corrected score
\begin{align}
    s^{(1)}(y,x,\theta) &:= s(y,x,\theta) -  \sum_{\widetilde y \in {\cal Y}}  s(\widetilde y,x,\theta)  \, \int_{\cal A} \,   f\left(\widetilde y \, \big| \, x, \alpha, \theta \right) \,
      \pi_{\rm post}(\alpha  \, \big| \, y,x,\theta)   \, {\rm d} \alpha
    \nonumber  \\
      &= s(y,x,\theta) -  \sum_{\widetilde y \in {\cal Y}}  s(\widetilde y,x,\theta) \, Q(\widetilde y \, | \, y,x,\theta) 
     \nonumber \\
      &=   S(x,\theta) \, \left[ \mathbbm{I}_{n_{\cal Y}} -  Q(x,\theta) \right] \, \delta(y) \, ,
      \label{S1Formula}
\end{align}
where $\mathbbm{I}_{n_{\cal Y}}$ is the $n_{\cal Y}\times n_{\cal Y}$ identity matrix.
Now, if the expected posterior distribution
$ \mathbb{E}[   \pi_{\rm post}(\alpha  \, \big| \, Y,X, \theta_0) \, \big| \, X=x ]  $
 is a good approximation to $\pi_0(\alpha \,| \,x)$, then we expect that
 $ \mathbb{E} [s^{(1)}(Y,X,\theta_0)] $ is indeed
 smaller than $\mathbb{E}   [s(Y,X,\theta_0)] $, that is, $s^{(1)}(y,x,\theta)$ should be a better choice than $ s(y,x,\theta)$ as a moment function $m(y,x,\theta)$ satisfying  \eqref{moment2approx}.  Note also that in the very special case where the prior equals the true distribution of $A$ (i.e., $\pi_{\rm prior} = \pi_0$), 
  $\mathbb{E}[   \pi_{\rm post}(\alpha  \, \big| \, Y,X,\theta_0) \, \big| \, X=x ]= \pi_0(\alpha \,| \,x) $
 and hence
  $s^{(1)}(Y,X,\theta_0)$ has exactly zero mean regardless of the choice of initial score function.
 
Of course, generically we still expect that $\mathbb{E}   [s^{(1)}(Y,X,\theta_0)] \neq 0$. It is, therefore, natural to iterate the above procedure,
 that is, to apply the same bias-correction method that we applied to
 $ s(y,x,\theta) $ also to $  s^{(1)}(y,x,\theta)$, which gives $  s^{(2)}(y,x,\theta)$,
and to continue iterating this procedure. 
Since the bias-correction method applied  to
 $s(y,x,\theta) = S(x,\theta) \, \delta(y)$ gives \eqref{S1Formula},
 it is easy to see that
after $q \in \{1,2,\ldots\}$ iterations of the same bias-correction procedure we  obtain
\begin{align}
       s^{(q)}(y,x,\theta) &:=  S(x,\theta) \left[ \mathbbm{I}_{n_{\cal Y}} -  Q(x,\theta) \right]^q \, \delta(y) \, .
    \label{GeneralIteration}
\end{align}
The bias-corrected functions $s^{(q)}(y,x,\theta)$ are the main choices of
moment function $m(y,x,\theta)$ that we consider in this paper.

To our knowledge, the  bias-corrected  scores  $s^{(q)}(y,x,\theta)$  have not previously been discussed in the  literature.
However, as will be explained in Section~\ref{subsec:AlternativeBC},
these bias-corrected scores are closely related to existing bias-correction methods. In particular, 
if in \eqref{S1Formula} we replace the posterior distribution
$ \pi_{\rm post}(\alpha  \, \big| \, y,x,\theta) $ with a 
point-mass distribution at the MLE of $A$
for fixed $\theta$, then the profile-score
adjustments of \cite{dhaene2015profile} are obtained.
In analogy to such existing methods,
we make the following conjecture,
which
is supported by our numerical results in Section~\ref{sec:Numerical2}
below
for the model in Examples~\ref{ex:StaticPanel}
and \ref{ex:StaticPanel2},
and which we presume to hold more generally
under appropriate 
regularity conditions.

\begin{conjecture}
     \label{Rate}
If we choose the original score function $s(y,x,\theta)$ 
to be the integrated or profile score, then  both
$  \mathbb{E}\left[ s^{(q)}(Y,X,\theta_0)  \right]$
and $\theta_* - \theta_0$ are  at most of order $T^{-q-1}$, as
$T\to\infty$ while $q$ is fixed. 
\end{conjecture}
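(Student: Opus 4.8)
The plan is to work conditionally on $X=x$ and $A=\alpha$ and to prove the stronger statement that the conditional bias decays at rate $T^{-q-1}$ uniformly in $\alpha$, after which integrating against $\pi_0(\alpha\,|\,x)$ and the distribution of $X$ preserves the rate. Let $f_\alpha$ denote the $n_{\cal Y}$-vector with entries $f(y_{(k)}\,|\,x,\alpha,\theta_0)$, so that $\mathbb{E}[\delta(Y)\,|\,X=x,A=\alpha]=f_\alpha$. Using the closed form \eqref{GeneralIteration} together with linearity of the conditional expectation,
\begin{align*}
   c^{(q)}(x,\alpha) := \mathbb{E}\left[ s^{(q)}(Y,X,\theta_0) \,\big|\, X=x,\,A=\alpha \right] = S(x,\theta_0)\left[ \mathbbm{I}_{n_{\cal Y}} - Q(x,\theta_0) \right]^q f_\alpha \, ,
\end{align*}
so the entire problem reduces to understanding how the operator $\mathbbm{I}_{n_{\cal Y}}-Q(x,\theta_0)$ acts on the family $\{f_\alpha\}$.

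The engine of the argument is a large-$T$ expansion of $Q(x,\theta_0)$ as a differential operator in $\alpha$. First I would show, directly from the definitions \eqref{DefPost} and \eqref{DefQ}, that $Q(x,\theta_0)\,f_\alpha = \int_{\cal A} f_{\alpha'}\,g_\alpha(\alpha')\,{\rm d}\alpha'$, where $g_\alpha(\alpha') = \pi_{\rm prior}(\alpha'\,|\,x)\sum_{\ell} f(y_{(\ell)}\,|\,x,\alpha',\theta_0)\,f(y_{(\ell)}\,|\,x,\alpha,\theta_0)\,/\,p_{\rm prior}(y_{(\ell)}\,|\,x,\theta_0)$ is a genuine probability density in $\alpha'$ (it is nonnegative and integrates to one). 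A Laplace / Bernstein--von Mises analysis then gives that $g_\alpha$ concentrates at $\alpha'=\alpha$ with both its mean shift and its variance of order $T^{-1}$. Taylor-expanding $f_{\alpha'}$ around $\alpha'=\alpha$, the zeroth-order term cancels against $f_\alpha$, so that
\begin{align*}
   \left[ \mathbbm{I}_{n_{\cal Y}} - Q(x,\theta_0) \right] f_\alpha = T^{-1}\,(\mathcal{L}_1 f)_\alpha + o(T^{-1}) \, ,
\end{align*}
where $\mathcal{L}_1$ is an (at most) second-order differential operator in $\alpha$ with $O(1)$ coefficients. Iterating, $[\mathbbm{I}_{n_{\cal Y}}-Q(x,\theta_0)]^q f_\alpha = T^{-q}(\mathcal{L}_1^q f)_\alpha + o(T^{-q})$, which is $T^{-q}$ times a linear combination of $\alpha$-derivatives $\partial_\alpha^j f_\alpha$, $j\le 2q$, with coefficients of order one.

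The final step collapses the $\alpha$-derivatives through $S$ and invokes the known base case. Because $S(x,\theta_0)$ does not depend on $\alpha$, it commutes with $\partial_\alpha$, so $S(x,\theta_0)\,\partial_\alpha^j f_\alpha = \partial_\alpha^j\,[S(x,\theta_0)f_\alpha] = \partial_\alpha^j c^{(0)}(x,\alpha)$, where $c^{(0)}(x,\alpha)=\mathbb{E}[s(Y,x,\theta_0)\,|\,X=x,A=\alpha]$ is the uncorrected conditional score mean. For the integrated or profile score (taken in the usual per-period normalization), the standard first-order large-$T$ bias result gives $c^{(0)}(x,\alpha)=O(T^{-1})$ uniformly in $\alpha$, with $\alpha$-derivatives of the same order. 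Combining, $c^{(q)}(x,\alpha)=T^{-q}\cdot O(T^{-1})=O(T^{-q-1})$, which yields $\mathbb{E}[s^{(q)}(Y,X,\theta_0)]=O(T^{-q-1})$. For the estimator, $\theta_*$ solves $\mathbb{E}[s^{(q)}(Y,X,\theta_*)]=0$; a Taylor expansion around $\theta_0$ with a Jacobian that is bounded away from singularity then gives $\theta_*-\theta_0 = O(T^{-q-1})$ as well.

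The hard part will be making the differential-operator expansion rigorous and, above all, uniform. The bound $c^{(q)}=O(T^{-q-1})$ requires that the remainder terms in the expansion of $[\mathbbm{I}_{n_{\cal Y}}-Q]$ are genuinely $o(T^{-q})$ even after $q$-fold iteration, and that the concentration of $g_\alpha$ holds uniformly across all outcomes---including the extreme outcomes $y$ (e.g.\ the all-zeros or all-ones strings) where the posterior fails to concentrate and $\widehat\alpha(y)$ diverges. Controlling these boundary contributions is a large-deviations problem, and establishing the base case $c^{(0)}=O(T^{-1})$ together with the uniformity of its $\alpha$-derivatives already requires the full apparatus of higher-order stochastic expansions; this is precisely why the statement is posed as a conjecture rather than a theorem, with the Taylor and GMM step for the estimator being routine by comparison.
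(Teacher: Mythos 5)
The statement you are proving is posed in the paper as Conjecture~\ref{Rate}, not as a theorem: the authors explicitly say it is ``supported by our numerical results'' (the bias-ratio calculations in Table~\ref{Rates}) and that they ``presume'' it to hold under appropriate regularity conditions. There is therefore no proof in the paper to compare yours against, and any complete argument here would be a genuine contribution beyond what the authors establish. Your overall strategy is sensible and, in my view, the natural one: the identity $Q(x,\theta_0)f_\alpha=\int_{\cal A} f_{\alpha'}\,g_\alpha(\alpha')\,{\rm d}\alpha'$ with $g_\alpha$ a bona fide probability density is correct (one checks $\int g_\alpha=\sum_\ell f(y_{(\ell)}|x,\alpha,\theta_0)=1$ directly), the reduction of $[\mathbbm{I}_{n_{\cal Y}}-Q]$ to a differential operator of order $T^{-1}$ in $\alpha$ is the right heuristic, and the observation that $S(x,\theta_0)$ commutes with $\partial_\alpha$ so that everything collapses onto derivatives of the base-case bias $c^{(0)}(x,\alpha)$ is the key structural insight that would make an induction on $q$ work.

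That said, what you have written is a plan rather than a proof, and the gaps you flag at the end are precisely where the mathematics lives. Two of them deserve sharper statement. First, your remainder bookkeeping is quantitatively insufficient: writing $[\mathbbm{I}_{n_{\cal Y}}-Q]^q f_\alpha=T^{-q}(\mathcal{L}_1^q f)_\alpha+o(T^{-q})$ cannot yield $c^{(q)}=O(T^{-q-1})$, because the leading term only becomes $O(T^{-q-1})$ \emph{after} applying $S$ (which buys an extra $T^{-1}$ via the base case), whereas a generic $o(T^{-q})$ remainder---say of order $T^{-q-1/2}$---would dominate. You need a full expansion $[\mathbbm{I}_{n_{\cal Y}}-Q]f_\alpha=\sum_{j\ge 1}T^{-j}(\mathcal{L}_j f)_\alpha$ with differential-operator coefficients, carried one order further, together with a proof that $S$ applied to every term of order $T^{-q}$ (not just the leading one) gains the extra factor $T^{-1}$; this in turn requires uniform control of $\partial_\alpha^j c^{(0)}(x,\alpha)$ for $j\le 2q$, which is itself a higher-order stochastic-expansion result that no one has written down at this level of generality. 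Second, the concentration of $g_\alpha$ is not uniform over the outcome space: for boundary outcomes (all zeros or all ones in the running example) the posterior does not concentrate and $\widehat\alpha$ diverges, so the Laplace step fails pointwise there and must be replaced by a large-deviations bound showing those outcomes contribute negligibly to $c^{(q)}$ uniformly in $\alpha$ --- including as $\alpha\to\pm\infty$, where the ``negligible'' outcomes carry most of the probability mass. Until these are filled in, your argument has the same epistemic status as the paper's: a well-motivated conjecture with a credible route to a proof.
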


From the perspective of the large-$T$ panel literature, we
have simply provided another way to achieve and iterate large-$T$ bias
correction. What is truly novel, however,
is that in the limit $q \rightarrow \infty$,
our correction can be directly related to the functional differencing method
of \cite{bonhomme2012functional}, which delivers 
exact (finite-$T$) inference results in models to which it is applicable.
Our procedure, therefore, interpolates  between large-$T$ bias correction and exact finite-$T$ inference.

\paragraph{Remark 1:}
If the initial score function $s(y,x,\theta)$  is  unbiased,
that is, if it satisfies the exact moment condition \eqref{moment}
or, equivalently, \eqref{moment2}, 
then the bias correction step \eqref{S1Formula} does not change the score
function at all. 
More generally, if at any point in the iteration procedure \eqref{GeneralIteration}
we obtain an unbiased score function, i.e., one for which 
$\sum_{y \in {\cal Y}} \, s^{(q)}(y , x, \theta) \, \allowbreak f\left(y \, \big| \, x, \alpha, \theta \right) \allowbreak  = 0$ for some $q$, then  we have
$s^{(r)}(y,x,\theta)  = s^{(q)}(y,x,\theta)$ for all further iterations $r \geq q$.
Hence, unbiased score functions correspond to fixed points
in our iteration procedure.

\subsection{Relation to functional differencing} 
\label{subsec:ExactFD}

It turns out that exact functional differencing
 (\citealt{bonhomme2012functional})
 corresponds to choosing 
 $$ 
   s_{\infty}(y,x,\theta) := \lim_{q \rightarrow \infty} 
    s^{(q)}(y,x,\theta) 
 $$
 as moment functions for estimating $\theta_0$,
and the lemma below formalizes this relationship.

Before presenting the lemma, it is useful to rewrite the bias-corrected score function in \eqref{GeneralIteration} in terms of the spectral decomposition of  $Q(x,\theta)$.
Let $\lambda_1(x,\theta) \ge\ldots\ge \lambda_{n_{\cal Y}}(x,\theta) $ be the eigenvalues of 
$Q(x,\theta)$ sorted in descending order,
and let $U(x,\theta)$ be the $n_{\cal Y} \times n_{\cal Y}$ 
matrix whose columns are the corresponding  right-eigenvectors of $Q(x,\theta)$.
Lemma~\ref{lemma:MatrixQ} guarantees that $\lambda_k(x,\theta) \in [0,1]$,
 for all $k=1,\ldots,n_{\cal Y}$, and that $Q(x,\theta)$ 
 can be diagonalized, that is,
\begin{align*}
     Q(x,\theta) &= U(x,\theta)  \;  \diag[  \lambda_k(x,\theta)  ]_{k=1,\ldots,n_{\cal Y}} \; U^{-1}(x,\theta) \, .
\end{align*}
Now let $h:[0,1] \rightarrow \mathbb{R}$ be a stem function and 
let $h(\cdot)$ be the associated primary matrix function (\citealt{horn1994topics}), so that 
 \begin{align}
    h[ Q(x,\theta) ]
     &=  U(x,\theta)  \;   \diag \left\{ h[  \lambda_k(x,\theta)  ]  \right\}_{k=1,\ldots,n_{\cal Y}} \; U^{-1}(x,\theta) \, .
    \label{DefhQ} 
\end{align}
That is, applying $h(\cdot)$ to the matrix $Q(x,\theta)$ simply means applying the stem function separately to each eigenvalue of 
$Q(x,\theta)$ while leaving the eigenvectors unchanged.
Every stem function $h$ defines a moment function
\begin{align}
       s_h(y,x,\theta) &:=  S(x,\theta)  \;  h[ Q(x,\theta) ] \; \delta(y) \, ,
     \label{GeneralScoreFunctions}
\end{align}
hence generalizing \eqref{GeneralIteration}.
In particular, the stem function $h_q(\lambda) :=  (1-\lambda)^q$ gives the moment function $s^{(q)}(y,x,\theta) =     s_{h_q}(y,x,\theta)$
in \eqref{GeneralIteration}.
In the limit $q \rightarrow \infty$
we obtain
$\lim_{q \rightarrow \infty} \, h_q(\lambda) =  \mathbbm{1}\{  \lambda = 0 \} $, 
for $\lambda \in [0,1]$,
and the limiting bias-corrected score function can therefore be written as
\begin{align}
    s_{\infty}(y,x,\theta) &=  S(x,\theta)  \,   h_\infty[ Q(x,\theta) ] \, \delta(y) \, ,
   &
   h_\infty(\lambda) &:=    \mathbbm{1}\{  \lambda = 0 \}  \, .
   \label{SCOREinf}
\end{align}  
Thus, $ s_{\infty}(y,x,\theta)$ is obtained by applying the projection
$h_\infty[ Q(x,\theta) ]$ to the original
score function $ s(y,x,\theta)$.
The projection matrix $h_\infty[ Q(x,\theta) ]$ is obtained
according to \eqref{DefhQ}
 by giving weight only to eigenvectors of $Q(x,\theta)$ accociated with zero eigenvalues.
\begin{lemma}
     \label{ExactFD}
      Let $x \in {\cal X}$. 
     Suppose that
     \eqref{model} and     
     \eqref{ConditionPrior} hold, 
          that $p_{\rm prior}(y \, |\, x,\theta_0) > 0$ for all $y \in {\cal Y}$,
     and that $f\left(y \, \big| \, x, \alpha, \theta_0 \right)$ is continuous
     in $\alpha\in{\cal A}$.
     Then
\begin{itemize}
   \item[(i)] we have
   $$\mathbb{E} \left[  s_{\infty}(Y,X,\theta_0)  \, \big| \, X=x, \, A = \alpha  \right] = 0 \, ,
    \qquad \textrm{for all } \alpha \in {\cal A} \, ;$$
   \item[(ii)]
    the matrix $Q(x,\theta_0)$
    has a zero eigenvalue if and only if
    there exists a non-zero moment function
   ${\mathfrak m}(y , x, \theta_0) \in \mathbb{R}$
   that satisfies
    $$ \mathbb{E} \left[ {\mathfrak m}(Y , X, \theta_0) \, \big| \, X=x, \, A = \alpha  \right] = 0 \, ,
    \qquad \textrm{for all } \alpha \in {\cal A} \, ;
    $$    
   \item[(iii)]  
   for every moment function
   ${\mathfrak m}(y , x, \theta_0) \in \mathbb{R}$ 
    that satisfies the condition in part (ii),  
   there exists a function $s(y , x, \theta_0) \in \mathbb{R}$
   such that
   $${\mathfrak m}(y , x, \theta_0) = s_{\infty}(y,x,\theta_0) \, ,\qquad  \textrm{for all } y \in {\cal Y} \,.$$
   
\end{itemize}
\end{lemma}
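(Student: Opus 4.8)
The plan is to reduce all three parts to the linear algebra of a single real symmetric Gram matrix. Fix $x\in\mathcal{X}$ and work throughout at $\theta_0$. Write $f_\alpha\in\mathbb{R}^{n_{\mathcal Y}}$ for the probability vector with entries $f(y_{(k)}\,|\,x,\alpha,\theta_0)$, let $D=\diag[p_{\rm prior}(y_{(\ell)}\,|\,x,\theta_0)]_{\ell=1,\ldots,n_{\mathcal Y}}$, and let $G$ be the $n_{\mathcal Y}\times n_{\mathcal Y}$ matrix with entries $G_{k\ell}=\int_{\mathcal A} f(y_{(k)}\,|\,x,\alpha,\theta_0)\,f(y_{(\ell)}\,|\,x,\alpha,\theta_0)\,\pi_{\rm prior}(\alpha\,|\,x)\,{\rm d}\alpha$. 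Substituting the posterior \eqref{DefPost} into \eqref{DefQ} shows that $Q(x,\theta_0)=G\,D^{-1}$, where $D$ is invertible by the hypothesis $p_{\rm prior}(y\,|\,x,\theta_0)>0$. Since $G$ is symmetric positive semidefinite and $D$ is positive diagonal, $Q$ is similar to the symmetric matrix $D^{-1/2}GD^{-1/2}$ (this is what underlies Lemma~\ref{lemma:MatrixQ}, which I will simply invoke for diagonalizability and for eigenvalues in $[0,1]$); because $D$ is invertible one also gets $\ker Q = D\,\ker G$ and $\operatorname{range} Q=\operatorname{range} G$.

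The analytic core is the identification of $\ker G$ with the space of valid moment functions. I would show that, for $v\in\mathbb{R}^{n_{\mathcal Y}}$,
\[
 v\in\ker G \iff \sum_{k} v_k\, f(y_{(k)}\,|\,x,\alpha,\theta_0)=0 \ \text{ for all }\alpha\in\mathcal A .
\]
The direction ``$\Leftarrow$'' is immediate from the formula for $G$. For ``$\Rightarrow$'', $Gv=0$ gives $0=v'Gv=\int_{\mathcal A}\bigl(\sum_k v_k f(y_{(k)}\,|\,x,\alpha,\theta_0)\bigr)^2\pi_{\rm prior}(\alpha\,|\,x)\,{\rm d}\alpha$; since the integrand is continuous, non-negative, and $\pi_{\rm prior}(\cdot\,|\,x)>0$ everywhere, it must vanish identically. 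This is the step I expect to be the main obstacle, because it is the only place where the qualitative hypotheses do real work: the strict positivity \eqref{ConditionPrior} and the continuity of $f$ in $\alpha$ are exactly what upgrades ``the moment condition holds almost everywhere'' to ``it holds for every $\alpha$''. A convenient reformulation is $\operatorname{range} G=\operatorname{span}\{f_\alpha:\alpha\in\mathcal A\}$ and $\ker G=(\operatorname{span}\{f_\alpha\})^{\perp}$.

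Parts (i) and (ii) then follow quickly. By Lemma~\ref{lemma:MatrixQ}, $Q$ is diagonalizable, so the primary matrix function $h_\infty[Q]$ in \eqref{SCOREinf} is precisely the spectral projection onto the eigenspace $\ker Q$ along $\operatorname{range}Q$ (for a diagonalizable matrix these two subspaces are complementary). For (i), since $\mathbb{E}[s_\infty(Y,X,\theta_0)\,|\,X=x,A=\alpha]=S(x,\theta_0)\,h_\infty[Q]\,f_\alpha$, it suffices to note that $f_\alpha\in\operatorname{span}\{f_\beta\}=\operatorname{range}G=\operatorname{range}Q$, whence $h_\infty[Q]\,f_\alpha=0$. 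For (ii), $Q$ has a zero eigenvalue iff $\ker Q\neq\{0\}$ iff $\ker G\neq\{0\}$ (as $D$ is invertible), and by the displayed equivalence the latter holds iff a non-zero moment function $\mathfrak m(\,\cdot\,,x,\theta_0)$ exists.

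For (iii), given a moment function $\mathfrak m$ satisfying the condition in (ii), I would simply take the initial score function $s:=\mathfrak m$, so that $S(x,\theta_0)$ is the single row $(\mathfrak m(y_{(1)},x,\theta_0),\ldots,\mathfrak m(y_{(n_{\mathcal Y})},x,\theta_0))$. By the displayed equivalence this $s$ already satisfies the exact moment condition \eqref{moment}, hence is unbiased, so by Remark~1 every bias-correction step leaves it unchanged: $s^{(q)}=s$ for all $q$, and therefore $s_\infty=s=\mathfrak m$. Equivalently, and as a direct check, the row $S$ lies in the left kernel of $Q$: the displayed equivalence gives $SG=0$, so $SQ=SGD^{-1}=0$ and thus $S\,h_\infty[Q]=S$, yielding $s_\infty=\mathfrak m$. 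Either route completes the proof.
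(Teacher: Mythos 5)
Your proof is correct and follows essentially the same route as the paper's: the factorization $Q(x,\theta_0)=G\,P_{\rm prior}^{-1}(x,\theta_0)$, the quadratic-form argument $v'Gv=\int_{\mathcal A}\bigl(v'f_\alpha\bigr)^2\pi_{\rm prior}(\alpha\,|\,x)\,{\rm d}\alpha=0$ upgraded from almost-every to every $\alpha$ via continuity and \eqref{ConditionPrior}, and the fixed-point argument for part (iii) are exactly the content of the paper's Lemma~\ref{lemma:help} and its use. The only cosmetic difference is that you organize part (i) around the complementary decomposition $\ker Q\oplus\operatorname{range}Q$ together with $f_\alpha\in\operatorname{range}Q$, whereas the paper applies the helper lemma to the rows of $[U^{-1}(x,\theta_0)]_0$; both rest on the same key step.
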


The proof of the lemma is given in the appendix.
Note that the true parameter value, $\theta_0$, only takes a special role in Lemma~\ref{ExactFD} because 
the expectation $\mathbb{E} \left(  \cdot \, | \, X=x, \, A = \alpha  \right)$ is evaluated using
$ f(y | x, \theta_0,\alpha)  $, according to \eqref{model}. If we had written these conditional expectations as explicit
sums over $ f(y | x, \theta_0,\alpha)  $, then we could have replaced $\theta_0$ in the lemma by an arbitrary value $\theta \in \Theta$; that is, there is nothing special about the parameter value $\theta_0$ that generates the data.

Part (i) of the lemma states that $s_{\infty}(y,x,\theta) $ is an exactly valid moment function in the sense of \eqref{moment}.
If  $Q(x,\theta)$ does not have any 
zero eigenvalues, then this part of the lemma is a trivial result, 
because then we simply have $ s_{\infty}(y,x,\theta)=0$, 
which is not useful for estimating $\theta_0$. However, if  $Q(x,\theta)$ does
have one or more zero eigenvalues, then, for a generic choice of $s(y,x,\theta)$, we have
$ s_{\infty}(y,x,\theta) \neq 0$, and part (i) of the lemma
becomes non-trivial.

Part (ii) of the lemma states that the existence of a
zero eigenvalue of $Q(x,\theta)$ is indeed a necessary and
sufficient condition for the existence of an
exactly valid moment function in the sense of \eqref{moment}.
As explained in the proof, if $ Q(x,\theta)$ has a zero eigenvalue, then
an exactly valid moment function ${\mathfrak m}(y , x, \theta)$  is simply obtained
by the entries of the corresponding left-eigenvector of $ Q(x,\theta)$.

Finally, part (iii) of the lemma states that any
such exactly valid moment function ${\mathfrak m}(y , x, \theta)$
can be obtained as 
$ \lim_{q\to\infty}s^{(q)}(y,x,\theta) $, i.e., as the limit of our iterative bias correction 
scheme above, for some appropriately chosen
initial score function $s(y , x, \theta)$. Thus, the set
of valid moment functions is identical to 
the set of all possible limits $ s_{\infty}(y,x,\theta) $.

Recall that finding such exactly valid moment functions
is the underlying idea of the functional differencing
method of \cite{bonhomme2012functional}. Thus, 
Lemma~\ref{ExactFD} establishes a very close relationship
between our bias correction method and functional differencing.

\paragraph{Remark 2:} 
If the set ${\cal A}$ is finite
with cardinality $n_{\cal A}=|{\cal A}|$,
then by construction 
${\rm rank}[Q(x,\theta)]\leq n_{\cal A}$. Thus,
whenever $n_{\cal A} < n_{\cal Y}$,
$Q(x,\theta)$ has $n_{\cal Y}- n_{\cal A}$ zero eigenvalues, 
implying that exact moment functions, free of $\alpha$,
are available. Notice, however, that this
assumes not only that $\alpha$ takes on only a finite
number of values, but also that these values
are known (they constitute the known set ${\cal A}$). 
By contrast, the literature on 
discretizing heterogeneity in panel data (e.g., 
\citealt{bonhomme2015grouped},
\citealt{su2016identifying},
\citealt{bonhomme2022discretizing}) 
usually considers the support points
of $A$ to be unknown.
For our purposes,
the fact that ${\rm rank} [Q(x,\theta)] \leq n_{\cal A}$ matters
only in our numerical implementation,  where
the rank of $Q(x,\theta)$ might be
truncated my the discretization of the set ${\cal A}$.

\section{Eigenvalues of $Q(x,\theta)$: numerical example} 
\label{sec:Numerical1}

Lemma~\ref{lemma:MatrixQ} guarantees that all
eigenvalues of the matrix $Q(x,\theta)$ lie in the interval $[0,1]$, 
and Lemma~\ref{ExactFD} shows that exact moment conditions
that are free of the incidental parameter $A$ are only
available if $Q(x,\theta)$ has a zero eigenvalue.
However, even in models where $Q(x,\theta)$ does not have
a zero eigenvalue, we suggest that calculating
the eigenvalues of $Q(x,\theta)$ is generally informative
about whether moment conditions exist that are approximately
free of the incidental parameters. This is because in
typical applications we expect that the distinction between
a zero eigenvalue and a very small eigenvalue of $Q(x,\theta)$ 
should be practically irrelevant, that is, as long as $Q(x,\theta)$
has one or more eigenvalues
that are very close to zero, then very good
approximate moment conditions in the sense of \eqref{moment2approx}
should exist.

It is difficult to make a general statement
about how small an eigenvalue of $Q(x,\theta)$ needs to be to
qualify as sufficiently small. However, in a typical
model with a sufficiently large number $n_{\cal Y}$ of outcomes  
(which for discrete choice panel data usually requires only moderately large $T$)
one will often have multiple eigenvalues of $Q(x,\theta)$ that
are so small (say smaller than $10^{-5}$) that there is little
doubt that they can be considered equal to zero for practical
purposes.

To illustrate this, consider Example~\ref{ex:StaticPanel2}
with normally distributed errors, $F(u)=\Phi(u)$,
even values of $T$, and $T_0=T_1=T/2$, which implies
$n_{\cal Y} = (1+T/2)^2$.
For the prior distribution of $A$ we  choose the standard normal distribution, $\pi_{\rm prior}(\alpha) = \phi(\alpha)$.\footnote{
In fact, for our numerical implementation, we discretize the standard normal prior by choosing 1000 grid points
$\alpha_j = \Phi^{-1}(j/1001)$, $j=1,\ldots,1000$, and we implement a prior
that gives equal probability to each of these grid points. The approximation
bias that results from this discretization is negligible for our purposes,
as long as $n_{\cal Y}$ is much smaller than 1000.
}
We then calculate the eigenvalues of the $n_{\cal Y} \times n_{\cal Y}$ matrix $Q(\theta)$
for  $\theta=1$ (there are no longer covariates $x$ in this example as they
are assigned non-random values). 
For $T=2$ we have $n_{\cal Y}=4$, and the four eigenvalues of $Q(1)$ are
$\lambda_1=1$, $\lambda_2=0.47463$, $\lambda_3=0.10727$, and $\lambda_4 = 0.00016$. For $T=4$ and $T=6$ we have $n_{\cal Y}=9$ and $n_{\cal Y}=16$,
respectively, and the corresponding eigenvalues of $Q(1)$
are plotted in Figures~\ref{EigenvaluePlotExample}
and~\ref{EigenvaluePlotExample2}. Figure~\ref{EigenvaluePlotExample3}
plots only the smallest eigenvalues of $Q(1)$ for $T=2,4,\ldots,20$.

\begin{figure}[p]
\begin{center}
\begin{tikzpicture}[scale=0.72]
\begin{axis}[
    xtick={1,2,3,4,5,6,7,8,9},
    xlabel={$j$},
    ylabel={eigenvalue $\lambda_j$},
    ymin=0
]
\addplot[only marks] table {
1 1.000000000000000000000000000000
2 0.644201524692854103911090391973
3 0.283013202537976863007143247119
4 0.076399117353885997064153145562
5 0.010121540641057413782706802501
6 0.000154754552155926706424821198
7 0.000033960047375582489859428991
8 0.000000078736447438692020901689
9 0.000000000756251178799830909244
};
\end{axis}
\end{tikzpicture}
\qquad \qquad
\begin{tikzpicture}[scale=0.72]
\begin{axis}[
    ymode=log,
    xtick={1,2,3,4,5,6,7,8,9},
    xlabel={$j$},
    ylabel={eigenvalue $\lambda_j$}
]
\addplot[only marks] table {
1 1.000000000000000000000000000000
2 0.644201524692854103911090391973
3 0.283013202537976863007143247119
4 0.076399117353885997064153145562
5 0.010121540641057413782706802501
6 0.000154754552155926706424821198
7 0.000033960047375582489859428991
8 0.000000078736447438692020901689
9 0.000000000756251178799830909244
};
\end{axis}
\end{tikzpicture}
\vspace{-0.2cm}
\caption{\label{EigenvaluePlotExample}
The eigenvalues $\lambda_j(\theta)$ of the matrix $Q(\theta)$ 
in Example~\ref{ex:StaticPanel2} are plotted 
for the case $\theta=1$,
$T=4$, $T_0=T_1=2$, and where both the error distribution and the prior
distribution of $A$ are standard normal.
The left and right plots show the same eigenvalues, just with 
a different scaling of the y-axis.
}
\end{center}
\end{figure}
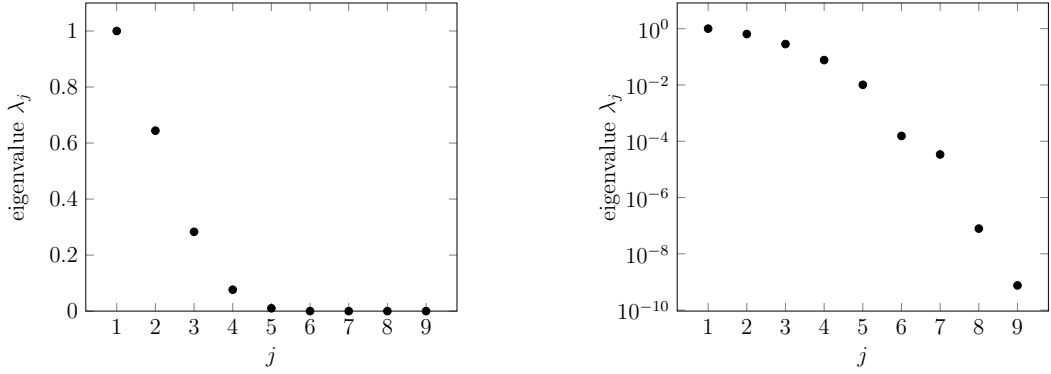

\begin{figure}[p]
\begin{center}
\begin{tikzpicture}[scale=0.72]
\begin{axis}[
    xlabel={$j$},
    ylabel={eigenvalue $\lambda_j$},
    ymin=0
]
\addplot[only marks] table {
1 1.000000000000000000000000000000
2 0.731096535483503931247780179272
3 0.409921104247830183138863350110
4 0.170769502676008130955222655012
5 0.050392351409742479203816615997
6 0.009623824615960652010576908032
7 0.000996153481709224824532565980
8 0.000050321684786527456361949998
9 0.000009620879592293233117570132
10 0.000000136669846492106888788686
11 0.000000027714527436872997416895
12 0.000000001254275321212861831022
13 0.000000000158052389368709712392
14 0.000000000001056341159198913450
15 0.000000000000008654370222156573
16 0.000000000000000000196653048839
};
\end{axis}
\end{tikzpicture}
\qquad \qquad
\begin{tikzpicture}[scale=0.72]
\begin{axis}[
    ymode=log,
    xlabel={$j$},
    ylabel={eigenvalue $\lambda_j$}
]
\addplot[only marks] table {
1 1.000000000000000000000000000000
2 0.731096535483503931247780179272
3 0.409921104247830183138863350110
4 0.170769502676008130955222655012
5 0.050392351409742479203816615997
6 0.009623824615960652010576908032
7 0.000996153481709224824532565980
8 0.000050321684786527456361949998
9 0.000009620879592293233117570132
10 0.000000136669846492106888788686
11 0.000000027714527436872997416895
12 0.000000001254275321212861831022
13 0.000000000158052389368709712392
14 0.000000000001056341159198913450
15 0.000000000000008654370222156573
16 0.000000000000000000196653048839
};
\end{axis}
\end{tikzpicture}
\vspace{-0.2cm}
\caption{\label{EigenvaluePlotExample2}
Same eigenvalue plot as in Figure~\ref{EigenvaluePlotExample}, but for
 $T=6$ and $T_0=T_1=3$.
}
\end{center}
\end{figure}
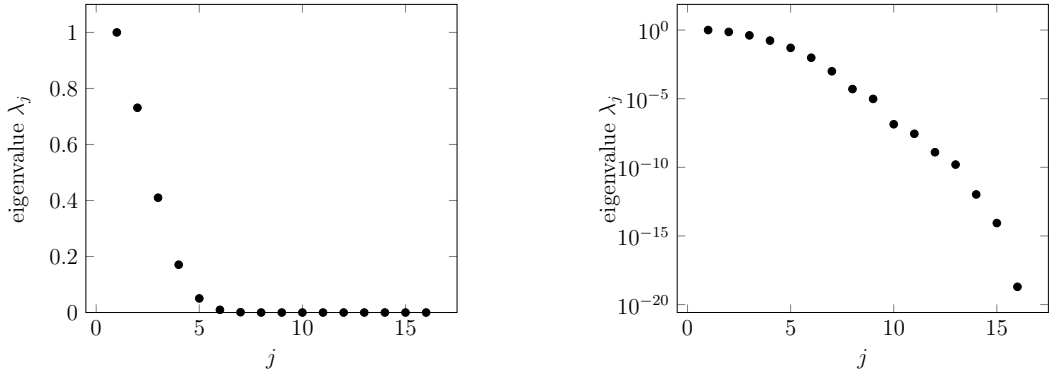

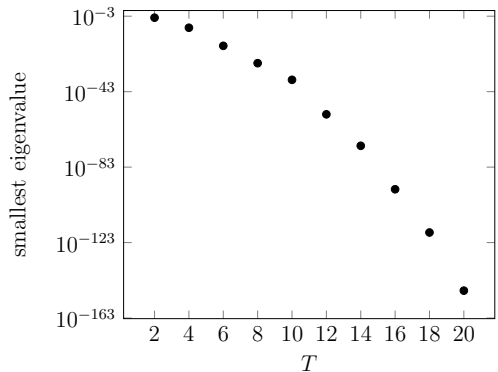
\begin{figure}[p]
\begin{center}
\begin{tikzpicture}[scale=0.72]
\begin{axis}[
    ymode=log,
    xlabel={$T$},
    ylabel={smallest eigenvalue},
    ymax=1,
    y label style={at={(-0.1,0.5)}},
    xtick={2,4,6,8,10,12,14,16,18,20},
]
\addplot[only marks] table {
2 0.00016208962688619187757609173834544867487675901462679219249227836731737711459070490218225910348408958762385461562826280840781251440724258295910989836440396844304824591899601844668688750946052015772937
4 0.00000000075625117879983090924384138287028416800414705667406799135688462835877015852273946463660154060182408080700412515947591894666040203842943212290200159801201144010738314433670516864030396967442678
6 0.00000000000000000019665304883857398392959902340535295197322483323113362230231833559022232397057934651079479104056451878496856609101892012316037844474602063847308853132148933444335192228025714360512271
8 0.00000000000000000000000000012446362894628706415113250017550411946330822286631245807652905256311042510908645521625366118972405082036238340671106887372851596926767068083762666687212519810753992036693565
10 0.00000000000000000000000000000000000018904666806866142200307323981191767948645274607617728170156045848235293088994449304679570984631404149037000005656699782439344139253540574857785197889913378593306673
12 0.00000000000000000000000000000000000000000000000000000009547273828090526347347377393180742384069283060322755828279245592707727251887546079394096921126433650810080600180020185166461015703279055561081342
14 0.00000000000000000000000000000000000000000000000000000000000000000000000203461147003047724089573013584213544419195716006725984899125914272604732488818200946321493744413836693866260565512516290492727233
16 0.00000000000000000000000000000000000000000000000000000000000000000000000000000000000000000000001806764038426894757156288915631551060349805777378444274679354601225081686975792385060346686462141157461667
18 0.00000000000000000000000000000000000000000000000000000000000000000000000000000000000000000000000000000000000000000000021890103068014099408475901619489323867474768850240678122818458185400338702428437386
20 0.00000000000000000000000000000000000000000000000000000000000000000000000000000000000000000000000000000000000000000000000000000000000000000000000000003384614864471714923893879891466374933163629840691317
};
\end{axis}
\end{tikzpicture}
\vspace{-0.2cm}
\caption{\label{EigenvaluePlotExample3}
For the same setting as in Figure~\ref{EigenvaluePlotExample}, but for
different values of $T$ (with $T_0=T_1=T/2$), we plot only the smallest
eigenvalue of $Q(\theta)$ for $\theta=1$. Notice that the smallest
eigenvalue is never zero, that is, $Q(1)$ has full rank
for all values of $T$ considered.
}
\end{center}
\end{figure}

From these figures, we see that for $T \geq 4$ the smallest eigenvalue
of $Q(1)$ is less than $10^{-9}$,
which we argue can be considered equal to zero for practical purposes. In Figures~\ref{EigenvaluePlotExample}
and~\ref{EigenvaluePlotExample2} we see that the largest
eigenvalue, $\lambda_1$, is equal to one (because $Q(1)$
is a stochastic matrix), but then the eigenvalues $\lambda_j$
decay exponentially fast as $j$ increases.\footnote{
The eigenvalues of $Q(1)$ presented in this section were obtained using Mathematica with a numerical precision of 1000 digits.}

If we were to replace
the standard normal distribution of the errors 
by the standardized logistic cdf
    $F(u)=   (1+e^{-\pi u / \sqrt{3}})^{-1}$
(normalized to have variance one),     
then the left-hand side (non-logarithmic) plots 
in Figures~\ref{EigenvaluePlotExample}
and~\ref{EigenvaluePlotExample2}
would look almost identical,
but there would be $(T/2)^2$ eigenvalues exactly equal to zero.
These zero eigenvalues for the logit model
are  due to the existence of a sufficient
statistic for $A$ and, correspondingly, the existence of
exact moment functions (associated with the left null-space
of $Q(\theta)$), as discussed in Section~\ref{subsect:ExactMoments}
above. Given that the change from 
standardized logistic errors to 
standard normal errors is a relatively
minor modification of the model, it is not
surprising that we see many eigenvalues close to zero 
in Figures~\ref{EigenvaluePlotExample}
and~\ref{EigenvaluePlotExample2}.

Figure~\ref{EigenvaluePlotExample3} shows that the smallest
eigenvalue of $Q(1)$ in this example also decays exponentially fast
as $T$ increases.\footnote{Presumably this finding
and the fast shrinkage of the identified
sets of common parameters (\citealt{HonoreTamer06}) and
average effects (\citealt{chernozhukov2013average})
are manifestations of the same phenomenon.}
However, for none of the values
of $T$ that we considered here, did we find an exact 
zero eigenvalue for the static binary choice probit model.
We conjecture that this is true for all $T\ge 2$, but we have
no proof.\footnote{
One needs to be careful with such conclusions for all $T$.
For example, we also experimented with another error
distribution.
If,  in Example~\ref{ex:StaticPanel2}  with $\theta=1$
and $T_0=T_1=T/2$,
 one chooses the error distribution
$F(u)$ to be the Laplace distribution with mean
zero and scale one, then numerically we found that
for any choice of prior the matrix
$Q(1)$ does not have
a zero eigenvalue for $T=2$ and $T=4$, but it does for $T=6$. So it is not impossible
that something similar could happen for the probit model
for sufficiently large $T$, although we do not expect it.
}

 This example illustrates that eigenvalues of $Q(x,\theta)$
very close to zero but not exactly zero may exist in interesting models.
When aiming to estimate the parameter
$\theta_0$ in a particular model of the form \eqref{model2},
our first recommendation is to calculate the
eigenvalues of $Q(x,\theta)$ for some representative values
of $\theta$ and $x$ to see if some of them are zero
or close to zero. If some are equal to zero, then
exact functional differencing
 (\citealt{bonhomme2012functional}) is applicable.
 If some are very close to zero, then approximate
 moment functions (as in \eqref{moment2approx}) are available.

The eigenvalues of $Q(x,\theta)$ are useful to examine whether exact or approximate moment functions for $\theta$
are available in a given model. However, as explained in Section~\ref{subsect:ExactMoments}, the corresponding 
moment functions also have to depend on $\theta$ to be useful for parameter estimation. For example, the matrix 
$Q(1)$ in Example~\ref{ex:StaticPanel} has exactly the same non-zero eigenvalues as
the matrix $Q(1)$ in Example~\ref{ex:StaticPanel2} that we just discussed,
but in addition, it has a zero eigenvalue with multiplicity equal to
 $2^T - (T_0+1)(T_1+1)$, corresponding to the uninformative moment functions in equation \eqref{uninformativeMoments}.  
 As a diagnostic tool, it can also be useful to  calculate the matrix $Q(x)$ for the model $ f(y | x, \theta_i,\alpha_i) $, which has no common parameters and where both $\theta_i$ and $\alpha_i$ are individual-specific fixed effects (this requires choosing a prior
for $\theta$ as well, which may have finite support to keep the computation simple). Every zero eigenvalue of that matrix $Q(x)$ then corresponds
to a moment function, for that value of $x$, that does not depend on $\theta$ (within the range of the chosen prior for $\theta$).
The  existence of uninformative moment functions \eqref{uninformativeMoments}
in Example~\ref{ex:StaticPanel}, for example,
can be detected in this way.

\section{Estimation}
\label{sec:Estimation}

Suppose we have chosen a prior distribution $\pi_{\mathrm{prior}}$, an initial score function $s(y,x,\theta)$, and an order of bias correction, $q$. This gives
the bias-corrected score function $s^{(q)}(y,x,\theta)$ as the moment function $m(y,x,\theta)$ for which the approximate moment condition 
\eqref{moment2approx} is assumed to hold.
For simplicity, suppose that $d_s=d_{\theta}$, so that 
the number of moment conditions equals the number of common
parameters we want to estimate.
We can then define 
a pseudo-true value $\theta_* \in \Theta$ as the solution 
of
\begin{align}
    \mathbb{E}\left[ m(Y_i,X_i,\theta_*)  \right] = 0 \, .
     \label{moment2pseudo}
\end{align}
The corresponding method of moments estimator $\widehat \theta$ satisfies
$
    \frac 1 n \sum_{i=1}^n \,  m(Y_i,X_i,\widehat \theta)   = 0  .
$
Under appropriate regularity conditions,
including existence and uniqueness of  $\theta_*$, we then have, as $n \rightarrow \infty$,
$$
     \sqrt{n}( \widehat \theta - \theta_* ) \;  \overset{d}{\to}  \; {\cal N}(0,V_*) \, ,
$$
with asymptotic variance given by
\begin{align}
     V_* &= [G'_*]^{-1} {\rm Var}\left[ m(Y_i,X_i,\theta_*) \right] G^{-1}_*  \, ,
     &
     G_* &=  \mathbb{E}\left[ \nabla_{\theta} \, m'(Y_i,X_i,\theta_*) \right] \, .
\end{align}
In Section~\ref{sec:Numerical2} we will report
the bias $\theta_* - \theta_0$ and the asymptotic variance
$V_*$ for different choices of moment functions in
Example~\ref{ex:StaticPanel2}.
Note that reporting the bias $\theta_* - \theta_0$ of the 
parameter estimates is   more informative 
than reporting the bias $\mathbb{E}\left[ m(Y_i,X_i,\theta_0)  \right]$
of the moment condition, in particular since the moment condition
can be rescaled by an arbitrary factor.

How should $q$ be chosen? 
If $Q(x,\theta)$ is singular, then
a natural choice is $q=\infty$,
as described in Section~\ref{subsec:ExactFD}, 
because this delivers an exactly unbiased moment condition.
If $Q(x,\theta)$ is nonsingular but some of its eigenvalues are small,
then, recalling our discussion in Section~\ref{sec:Numerical1},
our general recommendation is
to choose relatively large values of $q$.
The larger the chosen $q$, the more we rely on the smallest eigenvalues of $Q(x,\theta)$, because contributions to $s^{(q)}(y,x,\theta)$ from
larger eigenvalues of $Q(x,\theta)$ are downweighted more heavily as $q$ increases.
If none of the eigenvalues $Q(x,\theta)$ is close to zero, then
there are no moment conditions that hold approximately in the sense of 
\eqref{moment2approx}. Yet, even then, setting $q>0$ is likely to improve on
$q=0$, even though the remaining bias will still be non-negligible in general.
Whatever the eigenvalues of $Q(x,\theta)$ are (and, indeed, whether $Q(x,\theta)$ is singular or not), $q$ is a tuning parameter and a principled
way to choose $q$ would be to optimize some criterion,
for example, the (estimated) mean squared error of $\widehat \theta$.
We leave this for further study. In our numerical illustrations in Section~\ref{sec:Numerical2} we just consider finite values of $q$ up to $q=1000$, and $q=\infty$.

\bigskip

\section{Asymptotic and finite-sample properties}
\label{sec:Numerical2}

In this section, we report on asymptotic and finite-sample properties of $\widehat\theta$ for different choices of moment functions in the model of Example~\ref{ex:StaticPanel2} with standard normal errors (i.e., the panel probit model with a single, binary regressor and fixed effects)
and a variation thereof, the model of Example~\ref{ex:StaticPanel} with a continuous regressor. 
Throughout, we set $\theta_0=1$, we use a standard normal prior (i.e., $\pi_{\mathrm{prior}}(\alpha)=\phi(\alpha)$, as in Figures 1 and 2), we choose the integrated score \eqref{IntegratedScore} as the initial score function, and we vary $q$, the number of iterations
of the bias correction procedure.

We first present results on asymptotic and finite-sample biases and variances of $\widehat\theta$ for three cases where $T$ is relatively small:
Example~\ref{ex:StaticPanel2} with $T_{0}=T_{1}=T/2$ and $T\in\{4,6\}$ (Case 1); 
Example~\ref{ex:StaticPanel2} with $T_{0}=1, T_{1}=T-1$ and $T\in\{4,10\}$ (Case 2);
Example~\ref{ex:StaticPanel} with a continuous regressor 
$X_{it}\sim\mathcal{N}(0.5,0.25)$
and $T\in\{4,6\}$ (Case 3).
In all three cases we set the true distribution of $A$ equal to $\mathcal{N}(1,1)$ (i.e., $\pi_{0}(\alpha)=\phi(\alpha-1)$); note that this implies that $\pi_{\mathrm{prior}}$ is rather different from $\pi_{0}$.

Then, in the setup of Example~\ref{ex:StaticPanel2} with standard normal errors, we numerically explore Conjecture 1 by examining the asymptotic bias,
$\theta_{\ast}-\theta_{0}$, for $T$ up to 512, $q$ up to 3, and various choices of $\pi_{0}$ as detailed below.

\subsection{Case 1: Binary regressor, $T_0=T_1=T/2$}

Table~\ref{BiasAvarExample} reports $\theta_{\ast}-\theta_{0}$ and $V_{\ast}$
for the case where $T_{0}=T_{1}=T/2$ and $T\in\{4,6\}$.
The uncorrected estimate of $\theta_{0}$ $(q=0)$ has a large positive bias,
$0.5050$ when $T=4$ and $0.4056$ when $T=6$. Bias correction ($q>0$) reduces
the bias considerably, though non-monotonically in $q$. The least bias is
attained at $q=\infty$, where the bias is very small:
$-0.52\times10^{-4}$ when $T=4$ and ${-}0.25\times10^{-10}$ when $T=6$.

Our calculation of $V_{\ast}$ shows that there is, overall, a bias-variance
trade-off in this example. When $T=4$, $V_{\ast}$ slightly decreases as we move
from $q=0$ to $q=1$, but for $q\geq1$ we see that $V_{\ast}$ increases in
$q$; when $T=6$, $V_{\ast}$ increases in $q$ throughout. Strikingly, $V_{\ast}$
at $q=\infty$ is much larger than, say, at $q=1000$.\footnote{
The limit $\lim_{q \rightarrow \infty} \theta_*(q)$
can be obtained by solving
$\mathbb{E}[S(\theta_*)U_{n_{\cal Y}}(\theta_*)
[U^{-1}(\theta_*)]_{n_{\cal Y}}\delta(Y)] = 0$ for $\theta_*$, 
where $U_{n_{\cal Y}}(\theta)$
is the submatrix of $U(\theta)$ whose columns are
the right-eigenvectors of $Q(\theta)$ corresponding to
$\lambda_{n_{\cal Y}}(\theta)$, 
the smallest eigenvalue of $Q(\theta)$, and
$[U^{-1}(\theta)]_{n_{\cal Y}}$ is the submatrix of 
$[U^{-1}(\theta)]$ whose rows are the corresponding left-eigenvectors.
}

Table~\ref{BiasAvarExample} also reports, for a cross-sectional sample size of $n=1000$, the
approximate ${\rm RMSE}$ of $\widehat{\theta}$ and the approximate coverage rate of
the 95\% confidence interval with bounds $\widehat{\theta}\pm(n\widehat
{V}_{\ast})^{1/2}\Phi^{-1}(0.975)$, where $\widehat{V}_{\ast}$ is the
empirical analog to $V_{\ast}$. The approximate ${\rm RMSE}$ is calculated as
${\rm RMSE}=(V_{\ast}/n+(\theta_{\ast}-\theta_{0})^{2})^{1/2}$ and the approximate
coverage rate as ${\rm CI}_{0.95}=\Pr[|Z|\leq\Phi^{-1}(0.975)]$ where $Z\sim
\mathcal{N}((\theta_{\ast}-\theta_{0})/(V_{\ast}/n)^{1/2},1)$. Of course,
${\rm RMSE}$ and ${\rm CI}_{0.95}$ follow mechanically from $\theta_{\ast},V_{\ast},n$ and
heavily depend on the chosen $n$. For our choice of $n=1000$, when $T=4$, the
${\rm RMSE}$ is minimized at $q=2$, but this is a rather fortuitous consequence of the
bias having a local minimum (in magnitude) at $q=2$. In practice, when $T$ is
very small we would not recommend choosing $q$ less than $10$, say, because
otherwise, the remaining bias is often non-negligible. From $q=10$ or $20$
onward, the bias and confidence interval coverage rates are reasonably good.
On the other hand, we would also not recommend choosing $q$ to be very large
(including $q=\infty$), one reason being asymptotic variance inflation.

We also conducted a real Monte Carlo simulation, under the exact same setup as
described (and with $n=1000$ in particular). Table~\ref{SimulationsExample}
gives the results, based on 1000 Monte Carlo replications.
The column \textquotedblleft
bias\textquotedblright\ is $\mathbb{E}(\widehat{\theta}-\theta_{0})$ (estimated by
Monte Carlo), and the second column is $n\times \rm{var}(\widehat{\theta})$ (with $\rm{var}(\widehat{\theta})$ estimated by Monte Carlo), to be compared with $V_{\ast}$ in Table~\ref{BiasAvarExample}.
The columns ${\rm RMSE}$ and ${\rm CI}_{0.95}$ are the finite-$n$ ${\rm RMSE}$ and coverage rate.
All the results are close to those in Table~\ref{BiasAvarExample}, confirming that large-$n$
asymptotics provide a good approximation to the finite-$n$ distribution of
$\widehat{\theta}$. Note that Table~\ref{SimulationsExample}
does not report simulation results for $q=\infty$. This is because in some Monte Carlo runs, in particular for $T=6$, it turned out to be too difficult to numerically distinguish between
the smallest and the second smallest eigenvalue of $Q(x,\theta)$ and, therefore,
to reliably select the eigenvector associated with the smallest eigenvalue. This
is another reason not to recommend choosing $q=\infty$.

\subsection{Case 2: Binary regressor, $T_0=1, T_1=T-1$}
 
There is nothing special about the case $T_0=T_1=T/2$, which we just discussed.
Any other $T_0\ge 1$ and $T_1=T-T_0\ge 1$ lead to qualitatively similar results.
We illustrate this for the case $T_0=1$ and $T_1=T-1$. 
Table~\ref{BiasAvarExample2} is similar to Table~\ref{BiasAvarExample} and reports results for $(T_0,T_1)=(1,T-1)$ with $T\in\{4,10\}$. With $T_0=1$ fixed, we find that the bias for $q=0$ is nearly constant in $T$ (and large), while the bias of the bias-corrected estimates $(q>0)$ decreases in $T$. Again, the bias is not monotonic in $q$ (it changes sign) and it becomes very small as $q$ becomes sufficiently large, albeit more slowly than in the case $T_0=T_1=T/2$. Table~\ref{SimulationsExample2} presents the corresponding simulations, showing that the finite-sample results are, again, very close to asymptotic results reported in Table~\ref{BiasAvarExample2}.

\subsection{Case 3: Continuous regressor}

Here we illustrate approximate functional differencing in a panel probit model with a single continuous regressor and fixed effects.
Apart from the continuity of the regressor, the setup is identical to that in Cases 1 and 2 above.
Specifically, we consider Example~\ref{ex:StaticPanel} with $U_{it}\sim\mathcal{N}(0,1)$, $\theta_0=1$, 
$\pi_{\mathrm{prior}}(\alpha)=\phi(\alpha)$, $\pi_{0}(\alpha)=\phi(\alpha-1)$, and the integrated score as initial score.
We set $X_{it}\sim\mathcal{N}(0.5,0.25)$, so that $X_{it}$ has the same mean and variance (across $t$) as the binary regressor in Cases 1 and 2. For $T\in\{4,6\}$ and $n=1000$, we generated a single data set ${X_{it}}$ ($t=1,\ldots,T$; $i=1,\ldots,n$) to form $\mathcal{X}=\{X_1,\ldots,X_n\}$, so the results are to be understood with reference to this
 $\mathcal{X}$.  Table~\ref{ContinuousRegressor} presents the asymptotic biases and variances
 for $q$ up to $1000$.   (We do not consider $q=\infty$ here because, even
 though $\theta_*$ and $\widehat{\theta}$ remain well-defined in the limit $q\to\infty$,
 the limiting values are generically determined by a single $X_i\in\mathcal{X}$, that is, estimation would be based on a single observation~$i$, for which
 $Q(X_i,\theta)$ has the smallest eigenvalue within the
 sample.)
 The results are similar to those in Table~\ref{BiasAvarExample}, albeit the asymptotic biases and variances are  
 somewhat larger (for all $q$). Table~\ref{ContinuousRegressorSimulations} presents the corresponding simulation results, which are in line with those in Table~\ref{ContinuousRegressor}.

\subsection{Numerical calculations related to Conjecture 1}

Table~\ref{Rates} reports bias calculations for larger values of $T$ that support our conjecture about the rate of the bias as $T$ grows.
The model is as in Example~\ref{ex:StaticPanel2} with standard normal errors,
$\pi_{\mathrm{prior}}(\alpha)=\phi(\alpha)$, $T_0=T_1=T/2$, and $\theta_0=1$.
We calculated the bias, $b_T:=\theta_*-\theta_0$, with $\theta_*$ based on the integrated score,
for $q$ up to $3$, $T\in\{64,128,256,512\}$, and for five different choices of $\pi_0$: three degenerate distributions, $\delta_z$, with mass $1$ at $z\in\{0,1,2\}$ (i.e., $A=z$ is constant); and two uniform distributions, $U[0.5,1.5]$ and $U[0,2]$.
(So in all these cases $\pi_{\mathrm{prior}}$ is very different from $\pi_0$.)
Table~\ref{Rates} gives the bias $b_T$ for the chosen values of $T$,
and also the successive bias ratios $b_{T/2}/b_T$ (the three rightmost columns).
 If Conjecture~\ref{Rate} is correct, we should see these ratios converge to $2^{q+1}$ as $T\to\infty$.
For comparability, we also report the bias for $q=0$, where the known rate is confirmed:
$b_{T/2}/b_T$ converges to $2$ for every $\pi_0$, although when $\pi_0=\delta_2$
the convergence to $2$ is not yet quite visible. Presumably, this is because then $\pi_0$ and $\pi_{\mathrm{prior}}$ are quite different, requiring a larger $T$ for the ratio $b_{T/2}/b_T$ to become stable. For $q=1$, $b_{T/2}/b_T$ is seen to converge to $4$ (as conjectured) when $\pi_0\in\{\delta_0,\delta_1,U[0.5,1.5]\}$, while for $\pi_0\in\{\delta_2,U[0,2]\}$ the convergence is less visible. Overall, the picture is a little more blurred for $q=1$ compared to $q=0$. For $q=2$, where $b_{T/2}/b_T$ should converge to $8$, we tend to see this convergence for $\pi_0\in\{\delta_0,\delta_1\}$, although the picture is more blurred; but also here the order of magnitude of $b_{T/2}/b_T$ is in line with convergence to $8$. Finally, for $q=3$, the picture is even more blurred: we tend to see convergence of $b_{T/2}/b_T$ to $16$ only for $\pi_0=\delta_0$, but even here the order of magnitude of $b_{T/2}/b_T$ is not incompatible with convergence to $16$ (apart from the case $\pi_0=U[0,2]$, which clearly needs larger values of $T$ for $b_{T/2}/b_T$ to stabilize). Certainly, these numerical calculations are by no means proof of the conjectured rates, but looking at the last column of Table~\ref{Rates}, the ratios $b_{T/2}/b_T$ are broadly in line with the conjecture. Note, furthermore, that for any $q\ge 1$ the remaining bias in Table~\ref{Rates} is extremely tiny in most cases, unlike the bias of the maximum integrated likelihood estimator (reported  as $q=0$ in the table).

\begin{table}[p]
\begin{center}
\begin{tabular}
[c]{rrrrrrrrrr}\hline\hline
$q$ & $\theta_{\ast}-\theta_{0}$ & \multicolumn{1}{c}{$V_{\ast}$} & $\mathrm{RMSE}$ & $\mathrm{CI}_{0.95}$ &  & $\theta_{\ast}-\theta_{0}$ & \multicolumn{1}{c}{$V_{\ast}$} & $\mathrm{RMSE}$ & $\mathrm{CI}_{0.95}$\\\hline
& \multicolumn{4}{c}{$T_{0}=T_{1}=2,T=4$} &  & \multicolumn{4}{c}{$T_{0}=T_{1}=3,T=6$}\\\cline{2-5}\cline{7-10}
$0$ & $0.5050$ & $3.3313$ & $0.5083$ & $0.0000$ & \multicolumn{1}{r}{} & $0.4056$ & ${2.3063}$ & $0.4084$ & $0.0000$\\
$1$ & $0.1525$ & ${3.3116}$ & $0.1630$ & $0.2452$ & \multicolumn{1}{r}{} & $0.0787$ & $2.3477$ & $0.0924$ & $0.6315$\\
$2$ & $-0.0039$ & $3.4940$ & ${0.0592}$ & $0.9495$ & \multicolumn{1}{r}{} & $-0.0172$ & $2.5114$ & $0.0530$ & $0.9364$\\
$3$ & $-0.0513$ & $3.6583$ & $0.0793$ & $0.8644$ & \multicolumn{1}{r}{} & $-0.0321$ & $2.6289$ & $0.0605$ & $0.9041$\\
$4$ & $-0.0577$ & $3.8016$ & $0.0844$ & $0.8453$ & \multicolumn{1}{r}{} & $-0.0281$ & $2.7157$ & $0.0592$ & $0.9160$\\
$5$ & $-0.0516$ & $3.9307$ & $0.0812$ & $0.8694$ & \multicolumn{1}{r}{} & $-0.0221$ & $2.7789$ & $0.0572$ & $0.9296$\\
$6$ & $-0.0433$ & $4.0438$ & $0.0769$ & $0.8955$ & \multicolumn{1}{r}{} & $-0.0175$ & $2.8231$ & $0.0559$ & $0.9375$\\
$7$ & $-0.0358$ & $4.1383$ & $0.0736$ & $0.9139$ & \multicolumn{1}{r}{} & $-0.0144$ & $2.8532$ & $0.0553$ & $0.9416$\\
$8$ & $-0.0297$ & $4.2145$ & $0.0714$ & $0.9256$ & \multicolumn{1}{r}{} & $-0.0124$ & $2.8735$ & $0.0550$ & $0.9438$\\
$9$ & $-0.0252$ & $4.2745$ & $0.0701$ & $0.9328$ & \multicolumn{1}{r}{} & $-0.0111$ & $2.8873$ & $0.0549$ & $0.9451$\\
$10$ & $-0.0218$ & $4.3210$ & $0.0692$ & $0.9373$ & \multicolumn{1}{r}{} & $-0.0102$ & $2.8969$ & $0.0548$ & $0.9459$\\
$20$ & $-0.0124$ & $4.4722$ & $0.0680$ & $0.9460$ & \multicolumn{1}{r}{} & $-0.0070$ & $2.9262$ & $0.0545$ & $0.9481$\\
$40$ & $-0.0104$ & $4.5149$ & $0.0680$ & $0.9473$ & \multicolumn{1}{r}{} & $-0.0042$ & $2.9365$ & $0.0544$ & $0.9493$\\
$60$ & $-0.0091$ & $4.5311$ & $0.0679$ & $0.9479$ & \multicolumn{1}{r}{} & $-0.0031$ & $2.9392$ & $0.0543$ & $0.9496$\\
$80$ & $-0.0080$ & $4.5415$ & $0.0679$ & $0.9484$ & \multicolumn{1}{r}{} & $-0.0026$ & $2.9404$ & $0.0543$ & $0.9497$\\
$100$ & $-0.0071$ & $4.5495$ & $0.0678$ & $0.9487$ & \multicolumn{1}{r}{} & $-0.0024$ & $2.9411$ & ${0.0543}$ & $0.9498$\\
$200$ & $-0.0046$ & $4.5708$ & $0.0678$ & $0.9495$ & \multicolumn{1}{r}{} & $-0.0020$ & $2.9435$ & $0.0543$ & $0.9498$\\
$400$ & $-0.0033$ & $4.5793$ & $0.0678$ & $0.9497$ & \multicolumn{1}{r}{} & $-0.0017$ & $2.9464$ & $0.0543$ & $0.9499$\\
$600$ & $-0.0031$ & $4.5819$ & $0.0678$ & $0.9498$ & \multicolumn{1}{r}{} & $-0.0015$ & $2.9487$ & $0.0543$ & $0.9499$\\
$800$ & $-0.0030$ & $4.5849$ & $0.0678$ & $0.9498$ & \multicolumn{1}{r}{} & $-0.0014$ & $2.9508$ & $0.0543$ & $0.9499$\\
$1000$ & $-0.0030$ & $4.5881$ & $0.0678$ & $0.9498$ & \multicolumn{1}{r}{} & $-0.0012$ & $2.9528$ & $0.0544$ & $0.9499$\\
$\infty$ & ${-0.0}_{4}{52}$ & $19.2259$ & $0.1387$ & ${0.9500}$ & \multicolumn{1}{r}{} & ${-0.0}_{10}{25}$ &
$13.9013$ & $0.1179$ & ${0.9500}$\\\hline\hline
\end{tabular}
\vspace{-0.2cm}
\caption{\label{BiasAvarExample}
Asymptotic biases and variances, and approximate RMSEs and coverage rates
($n=1000$). The model is as in Example~\ref{ex:StaticPanel2}
with standard normal errors, $\pi_{\mathrm{prior}}(\alpha)=\phi
(\alpha)$, $\pi_{0}(\alpha)=\phi(\alpha-1)$, and $\theta_{0}=1$. The pseudo-true
value $\theta_{\ast}$ is based on the integrated score.
Notation: $0_{r}=\underset{r\text{ zeros}}{\underbrace{00\ldots0}}$, e.g., $-0.0_{4}%
52=-0.000052$. }
\end{center}
\end{table}

\begin{table}[p]
\begin{center}
\begin{tabular}
[c]{rccccccccc}\hline\hline
$q$ & bias & $n\times$var & RMSE & ${\rm CI}_{0.95}$ &  & bias & $n\times$var &
${\rm RMSE}$ & ${\rm CI}_{0.95}$\\\hline
& \multicolumn{4}{c}{$T_{0}=T_{1}=2,T=4$} &  & \multicolumn{4}{c}{$T_{0}%
=T_{1}=3,T=6$}\\\cline{2-5}\cline{7-10}%
$0$ & \multicolumn{1}{r}{$0.5067$} & \multicolumn{1}{r}{$3.5931$} &
\multicolumn{1}{r}{$0.5102$} & \multicolumn{1}{r}{$0.0000$} &
\multicolumn{1}{r}{} & \multicolumn{1}{r}{$0.4076$} &
\multicolumn{1}{r}{${2.3947}$} & \multicolumn{1}{r}{$0.4105$} &
\multicolumn{1}{r}{$0.000$}\\
$1$ & \multicolumn{1}{r}{$0.1543$} & \multicolumn{1}{r}{${3.5243}$} &
\multicolumn{1}{r}{$0.1653$} & \multicolumn{1}{r}{$0.2260$} &
\multicolumn{1}{r}{} & \multicolumn{1}{r}{$0.0807$} &
\multicolumn{1}{r}{$2.4378$} & \multicolumn{1}{r}{$0.0946$} &
\multicolumn{1}{r}{$0.614$}\\
$2$ & \multicolumn{1}{r}{$-0.0020$} & \multicolumn{1}{r}{$3.6881$} &
\multicolumn{1}{r}{${0.0608}$} & \multicolumn{1}{r}{$0.9400$} &
\multicolumn{1}{r}{} & \multicolumn{1}{r}{$-0.0151$} &
\multicolumn{1}{r}{$2.6022$} & \multicolumn{1}{r}{${0.0532}$} &
\multicolumn{1}{r}{$0.936$}\\
$3$ & \multicolumn{1}{r}{$-0.0493$} & \multicolumn{1}{r}{$3.8578$} &
\multicolumn{1}{r}{$0.0793$} & \multicolumn{1}{r}{$0.8530$} &
\multicolumn{1}{r}{} & \multicolumn{1}{r}{$-0.0299$} &
\multicolumn{1}{r}{$2.7221$} & \multicolumn{1}{r}{$0.0601$} &
\multicolumn{1}{r}{$0.902$}\\
$4$ & \multicolumn{1}{r}{$-0.0556$} & \multicolumn{1}{r}{$4.0160$} &
\multicolumn{1}{r}{$0.0843$} & \multicolumn{1}{r}{$0.8300$} &
\multicolumn{1}{r}{} & \multicolumn{1}{r}{$-0.0259$} &
\multicolumn{1}{r}{$2.8112$} & \multicolumn{1}{r}{$0.0590$} &
\multicolumn{1}{r}{$0.912$}\\
$5$ & \multicolumn{1}{r}{$-0.0494$} & \multicolumn{1}{r}{$4.1629$} &
\multicolumn{1}{r}{$0.0813$} & \multicolumn{1}{r}{$0.8590$} &
\multicolumn{1}{r}{} & \multicolumn{1}{r}{$-0.0198$} &
\multicolumn{1}{r}{$2.8760$} & \multicolumn{1}{r}{$0.0572$} &
\multicolumn{1}{r}{$0.924$}\\
$6$ & \multicolumn{1}{r}{$-0.0409$} & \multicolumn{1}{r}{$4.2930$} &
\multicolumn{1}{r}{$0.0773$} & \multicolumn{1}{r}{$0.8830$} &
\multicolumn{1}{r}{} & \multicolumn{1}{r}{$-0.0151$} &
\multicolumn{1}{r}{$2.9209$} & \multicolumn{1}{r}{$0.0561$} &
\multicolumn{1}{r}{$0.937$}\\
$7$ & \multicolumn{1}{r}{$-0.0333$} & \multicolumn{1}{r}{$4.4024$} &
\multicolumn{1}{r}{$0.0742$} & \multicolumn{1}{r}{$0.9080$} &
\multicolumn{1}{r}{} & \multicolumn{1}{r}{$-0.0120$} &
\multicolumn{1}{r}{$2.9512$} & \multicolumn{1}{r}{$0.0556$} &
\multicolumn{1}{r}{$0.946$}\\
$8$ & \multicolumn{1}{r}{$-0.0272$} & \multicolumn{1}{r}{$4.4908$} &
\multicolumn{1}{r}{$0.0723$} & \multicolumn{1}{r}{$0.9190$} &
\multicolumn{1}{r}{} & \multicolumn{1}{r}{$-0.0100$} &
\multicolumn{1}{r}{$2.9712$} & \multicolumn{1}{r}{$0.0554$} &
\multicolumn{1}{r}{$0.952$}\\
$9$ & \multicolumn{1}{r}{$-0.0225$} & \multicolumn{1}{r}{$4.5604$} &
\multicolumn{1}{r}{$0.0712$} & \multicolumn{1}{r}{$0.9270$} &
\multicolumn{1}{r}{} & \multicolumn{1}{r}{$-0.0087$} &
\multicolumn{1}{r}{$2.9845$} & \multicolumn{1}{r}{$0.0553$} &
\multicolumn{1}{r}{$0.953$}\\
$10$ & \multicolumn{1}{r}{$-0.0191$} & \multicolumn{1}{r}{$4.6142$} &
\multicolumn{1}{r}{$0.0706$} & \multicolumn{1}{r}{$0.9340$} &
\multicolumn{1}{r}{} & \multicolumn{1}{r}{$-0.0078$} &
\multicolumn{1}{r}{$2.9934$} & \multicolumn{1}{r}{$0.0553$} &
\multicolumn{1}{r}{${0.951}$}\\
$20$ & \multicolumn{1}{r}{$-0.0096$} & \multicolumn{1}{r}{$4.7858$} &
\multicolumn{1}{r}{$0.0698$} & \multicolumn{1}{r}{$0.9380$} &
\multicolumn{1}{r}{} & \multicolumn{1}{r}{$-0.0046$} &
\multicolumn{1}{r}{$3.0148$} & \multicolumn{1}{r}{$0.0551$} &
\multicolumn{1}{r}{$0.955$}\\
$40$ & \multicolumn{1}{r}{$-0.0075$} & \multicolumn{1}{r}{$4.8264$} &
\multicolumn{1}{r}{$0.0699$} & \multicolumn{1}{r}{$0.9390$} &
\multicolumn{1}{r}{} & \multicolumn{1}{r}{$-0.0018$} &
\multicolumn{1}{r}{$3.0159$} & \multicolumn{1}{r}{$0.0549$} &
\multicolumn{1}{r}{$0.956$}\\
$60$ & \multicolumn{1}{r}{$-0.0062$} & \multicolumn{1}{r}{$4.8383$} &
\multicolumn{1}{r}{$0.0698$} & \multicolumn{1}{r}{$0.9370$} &
\multicolumn{1}{r}{} & \multicolumn{1}{r}{$-0.0007$} &
\multicolumn{1}{r}{$3.0150$} & \multicolumn{1}{r}{$0.0549$} &
\multicolumn{1}{r}{$0.957$}\\
$80$ & \multicolumn{1}{r}{$-0.0051$} & \multicolumn{1}{r}{$4.8446$} &
\multicolumn{1}{r}{$0.0698$} & \multicolumn{1}{r}{$0.9380$} &
\multicolumn{1}{r}{} & \multicolumn{1}{r}{$-0.0003$} &
\multicolumn{1}{r}{$3.0146$} & \multicolumn{1}{r}{$0.0549$} &
\multicolumn{1}{r}{$0.957$}\\
$100$ & \multicolumn{1}{r}{$-0.0043$} & \multicolumn{1}{r}{$4.8489$} &
\multicolumn{1}{r}{$0.0698$} & \multicolumn{1}{r}{$0.9380$} &
\multicolumn{1}{r}{} & \multicolumn{1}{r}{${-0.0001}$} &
\multicolumn{1}{r}{$3.0145$} & \multicolumn{1}{r}{$0.0549$} &
\multicolumn{1}{r}{$0.958$}\\
$200$ & \multicolumn{1}{r}{$-0.0018$} & \multicolumn{1}{r}{$4.8588$} &
\multicolumn{1}{r}{$0.0697$} & \multicolumn{1}{r}{$0.9420$} &
\multicolumn{1}{r}{} & \multicolumn{1}{r}{$0.0003$} &
\multicolumn{1}{r}{$3.0150$} & \multicolumn{1}{r}{$0.0549$} &
\multicolumn{1}{r}{$0.957$}\\
$400$ & \multicolumn{1}{r}{$-0.0005$} & \multicolumn{1}{r}{$4.8603$} &
\multicolumn{1}{r}{$0.0697$} & \multicolumn{1}{r}{$0.9440$} &
\multicolumn{1}{r}{} & \multicolumn{1}{r}{$0.0006$} &
\multicolumn{1}{r}{$3.0151$} & \multicolumn{1}{r}{$0.0549$} &
\multicolumn{1}{r}{$0.957$}\\
$600$ & \multicolumn{1}{r}{$-0.0003$} & \multicolumn{1}{r}{$4.8611$} &
\multicolumn{1}{r}{$0.0697$} & \multicolumn{1}{r}{$0.9440$} &
\multicolumn{1}{r}{} & \multicolumn{1}{r}{$0.0008$} &
\multicolumn{1}{r}{$3.0152$} & \multicolumn{1}{r}{$0.0549$} &
\multicolumn{1}{r}{$0.957$}\\
$800$ & \multicolumn{1}{r}{$-0.0003$} & \multicolumn{1}{r}{$4.8630$} &
\multicolumn{1}{r}{$0.0697$} & \multicolumn{1}{r}{${0.9450}$} &
\multicolumn{1}{r}{} & \multicolumn{1}{r}{$0.0009$} &
\multicolumn{1}{r}{$3.0153$} & \multicolumn{1}{r}{$0.0549$} &
\multicolumn{1}{r}{$0.957$}\\
$1000$ & \multicolumn{1}{r}{${-0.0002}$} & \multicolumn{1}{r}{$4.8652$} &
\multicolumn{1}{r}{$0.0698$} & \multicolumn{1}{r}{${0.9450}$} &
\multicolumn{1}{r}{} & \multicolumn{1}{r}{$0.0010$} &
\multicolumn{1}{r}{$3.0156$} & \multicolumn{1}{r}{$0.0549$} &
\multicolumn{1}{r}{$0.959$}\\\hline\hline
\end{tabular}
\vspace{-0.2cm}
\caption{\label{SimulationsExample}
Simulation results for $n=1000$. Setup as in Table~\ref{BiasAvarExample}. Results based on 1000 Monte Carlo replications.}
\end{center}
\end{table}

\begin{table}[p]
\begin{center}
\begin{tabular}
[c]{rrrrrrrrrr}\hline\hline
$q$ & $\theta_{\ast}-\theta_{0}$ & \multicolumn{1}{c}{$V_{\ast}$} & $\mathrm{RMSE}$ & $\mathrm{CI}_{0.95}$ &  & $\theta_{\ast}-\theta_{0}$ & \multicolumn{1}{c}{$V_{\ast}$} & $\mathrm{RMSE}$ & $\mathrm{CI}_{0.95}$\\\hline
& \multicolumn{4}{c}{$T_{0}=1,T_{1}=3,T=4$} &  & \multicolumn{4}{c}{$T_{0}=1,T_{1}=9,T=10$}\\\cline{2-5}\cline{7-10}
$0$ & $0.6704$ & ${2.7135}$ & $0.6725$ & $0.0000$ &  & $0.6721$ & ${1.4919}$ & $0.6732$ & $0.0000$\\
$1$ & $0.3131$ & $3.0434$ & $0.3179$ & $0.0001$ &  & $0.2545$ & $2.1031$ & $0.2586$ & $0.0002$\\
$2$ & $0.0758$ & $3.6051$ & $0.0967$ & $0.7564$ &  & $0.0567$ & $2.7400$ & $0.0771$ & $0.8087$\\
$3$ & $-0.0252$ & $3.9220$ & ${0.0675}$ & $0.9312$ &  & $0.0026$ & $2.9696$ & ${0.0546}$ & $0.9497$\\
$4$ & $-0.0576$ & $4.0648$ & $0.0859$ & $0.8525$ &  & $-0.0122$ & $3.0664$ & $0.0567$ & $0.9444$\\
$5$ & $-0.0641$ & $4.1438$ & $0.0908$ & $0.8310$ &  & $-0.0172$ & $3.1265$ & $0.0585$ & $0.9391$\\
$6$ & $-0.0623$ & $4.2037$ & $0.0899$ & $0.8395$ &  & $-0.0192$ & $3.1701$ & $0.0595$ & $0.9366$\\
$7$ & $-0.0583$ & $4.2566$ & $0.0875$ & $0.8545$ &  & $-0.0200$ & $3.2030$ & $0.0600$ & $0.9356$\\
$8$ & $-0.0544$ & $4.3047$ & $0.0852$ & $0.8683$ &  & $-0.0202$ & $3.2285$ & $0.0603$ & $0.9354$\\
$9$ & $-0.0510$ & $4.3484$ & $0.0833$ & $0.8793$ &  & $-0.0201$ & $3.2485$ & $0.0604$ & $0.9356$\\
$10$ & $-0.0481$ & $4.3877$ & $0.0819$ & $0.8877$ &  & $-0.0199$ & $3.2646$ & $0.0605$ & $0.9360$\\
$20$ & $-0.0354$ & $4.6294$ & $0.0767$ & $0.9184$ &  & $-0.0164$ & $3.3403$ & $0.0601$ & $0.9408$\\
$40$ & $-0.0281$ & $4.8096$ & $0.0748$ & $0.9310$ &  & $-0.0130$ & $3.3925$ & $0.0597$ & $0.9442$\\
$60$ & $-0.0250$ & $4.8776$ & $0.0742$ & $0.9351$ &  & $-0.0114$ & $3.4177$ & $0.0596$ & $0.9456$\\
$80$ & $-0.0231$ & $4.9177$ & $0.0738$ & $0.9375$ &  & $-0.0104$ & $3.4340$ & $0.0595$ & $0.9464$\\
$100$ & $-0.0217$ & $4.9491$ & $0.0736$ & $0.9391$ &  & $-0.0097$ & $3.4462$ & $0.0595$ & $0.9469$\\
$200$ & $-0.0173$ & $5.0518$ & $0.0732$ & $0.9432$ &  & $-0.0078$ & $3.4839$ & $0.0595$ & $0.9480$\\
$400$ & $-0.0147$ & $5.1243$ & $0.0731$ & $0.9452$ &  & $-0.0065$ & $3.5246$ & $0.0597$ & $0.9486$\\
$600$ & $-0.0141$ & $5.1505$ & $0.0731$ & $0.9455$ &  & $-0.0058$ & $3.5524$ & $0.0599$ & $0.9489$\\
$800$ & $-0.0139$ & $5.1726$ & $0.0732$ & $0.9457$ &  & $-0.0054$ & $3.5744$ & $0.0600$ & $0.9491$\\
$1000$ & $-0.0137$ & $5.1968$ & $0.0734$ & $0.9459$ &  & $-0.0051$ & $3.5935$ & $0.0602$ & $0.9492$\\
$\infty$ & ${-0.0}_{3}{78}$ & $15.7761$ & $0.1256$ & ${0.9500}$ &  & ${-0.0}_{6}{20}$ & $35.4401$ & $0.1883$ & ${0.9500}$\\\hline\hline
\end{tabular}
\vspace{-0.2cm}
\caption{\label{BiasAvarExample2}
Asymptotic biases and variances, and approximate RMSEs and coverage rates
($n=1000$). The model is as in Example~\ref{ex:StaticPanel2}
with standard normal errors, $\pi_{\mathrm{prior}}(\alpha)=\phi
(\alpha)$, $\pi_{0}(\alpha)=\phi(\alpha-1)$, and $\theta_{0}=1$. The pseudo-true
value $\theta_{\ast}$ is based on the integrated score.
Notation: $0_{r}=\underset{r\text{ zeros}}{\underbrace{00\ldots0}}$. 
}
\end{center}
\end{table}

\begin{table}[p]
\begin{center}
\begin{tabular}
[c]{rrrrrrrrrr}\hline\hline
$q$ & bias & $n\times$var & RMSE & ${\rm CI}_{0.95}$ &  & bias & $n\times$var &
${\rm RMSE}$ & ${\rm CI}_{0.95}$\\\hline
& \multicolumn{4}{c}{$T_{0}=1,T_{1}=3,T=4$} &  & \multicolumn{4}{c}{$T_{0}=1,%
T_{1}=9,T=10$}\\\cline{2-5}\cline{7-10}%
$0$ & $0.6721$ & ${2.6261}$ & $0.6740$ & $0.0000$ &  & $0.6740$ &
${1.4847}$ & $0.6751$ & \multicolumn{1}{r}{$0.0000$}\\
$1$ & $0.3147$ & $3.1535$ & $0.3197$ & $0.0000$ &  & $0.2559$ & $2.1953$ &
$0.2602$ & \multicolumn{1}{r}{$0.0000$}\\
$2$ & $0.0774$ & $3.8340$ & $0.0991$ & $0.7500$ &  & $0.0576$ & $2.8768$ &
$0.0787$ & \multicolumn{1}{r}{$0.7950$}\\
$3$ & $-0.0239$ & $4.1970$ & $0.0690$ & $0.9160$ &  & $0.0034$ & $3.1192$ &
${0.0560}$ & \multicolumn{1}{r}{$0.9470$}\\
$4$ & $-0.0562$ & $4.3578$ & $0.0867$ & $0.8480$ &  & $-0.0115$ & $3.2218$ &
$0.0579$ & \multicolumn{1}{r}{$0.9420$}\\
$5$ & $-0.0627$ & $4.4455$ & $0.0915$ & $0.8320$ &  & $-0.0166$ & $3.2857$ &
$0.0597$ & \multicolumn{1}{r}{$0.9360$}\\
$6$ & $-0.0608$ & $4.5111$ & $0.0906$ & $0.8360$ &  & $-0.0186$ & $3.3324$ &
$0.0606$ & \multicolumn{1}{r}{$0.9330$}\\
$7$ & $-0.0568$ & $4.5686$ & $0.0883$ & $0.8550$ &  & $-0.0194$ & $3.3679$ &
$0.0612$ & \multicolumn{1}{r}{$0.9310$}\\
$8$ & $-0.0528$ & $4.6209$ & $0.0861$ & $0.8630$ &  & $-0.0196$ & $3.3955$ &
$0.0615$ & \multicolumn{1}{r}{$0.9310$}\\
$9$ & $-0.0493$ & $4.6682$ & $0.0843$ & $0.8720$ &  & $-0.0195$ & $3.4172$ &
$0.0616$ & \multicolumn{1}{r}{$0.9320$}\\
$10$ & $-0.0464$ & $4.7110$ & $0.0829$ & $0.8810$ &  & $-0.0192$ & $3.4347$ &
$0.0617$ & \multicolumn{1}{r}{$0.9320$}\\
$20$ & $-0.0334$ & $4.9732$ & $0.0780$ & $0.9050$ &  & $-0.0157$ & $3.5184$ &
$0.0613$ & \multicolumn{1}{r}{$0.9370$}\\
$40$ & $-0.0258$ & $5.1642$ & $0.0764$ & $0.9130$ &  & $-0.0122$ & $3.5780$ &
$0.0610$ & \multicolumn{1}{r}{$0.9410$}\\
$60$ & $-0.0226$ & $5.2313$ & $0.0758$ & $0.9190$ &  & $-0.0105$ & $3.6080$ &
$0.0610$ & \multicolumn{1}{r}{$0.9420$}\\
$80$ & $-0.0206$ & $5.2686$ & $0.0754$ & $0.9210$ &  & $-0.0094$ & $3.6283$ &
$0.0610$ & \multicolumn{1}{r}{$0.9430$}\\
$100$ & $-0.0190$ & $5.2971$ & $0.0752$ & $0.9260$ &  & $-0.0087$ & $3.6438$ &
$0.0610$ & \multicolumn{1}{r}{$0.9430$}\\
$200$ & $-0.0145$ & $5.3904$ & $0.0748$ & ${0.9350}$ &  & $-0.0066$ &
$3.6935$ & $0.0611$ & \multicolumn{1}{r}{$0.9420$}\\
$400$ & $-0.0117$ & $5.4577$ & ${0.0748}$ & $0.9330$ &  & $-0.0051$ &
$3.7479$ & $0.0614$ & \multicolumn{1}{r}{$0.9430$}\\
$600$ & $-0.0111$ & $5.4847$ & $0.0749$ & ${0.9350}$ &  & $-0.0042$ &
$3.7836$ & $0.0617$ & \multicolumn{1}{r}{$0.9440$}\\
$800$ & $-0.0108$ & $5.5093$ & $0.0750$ & ${0.9350}$ &  & $-0.0037$ &
$3.8109$ & $0.0618$ & \multicolumn{1}{r}{${0.9480}$}\\
$1000$ & ${-0.0106}$ & $5.5370$ & $0.0752$ & $0.9340$ &  &
${-0.0033}$ & $3.8338$ & $0.0620$ & \multicolumn{1}{r}{$0.9470$%
}\\\hline\hline
\end{tabular}
\vspace{-0.2cm}
\caption{\label{SimulationsExample2}
Simulation results for $n=1000$. Setup as in Table~\ref{BiasAvarExample2}. Results based on 1000 Monte Carlo replications.}
\end{center}
\end{table}

\bigskip

\begin{table}[p]
\begin{center}
\begin{tabular}
[c]{rrrrrrrrrr}\hline\hline
$q$ & $\theta_{\ast}-\theta_{0}$ & \multicolumn{1}{c}{$V_{\ast}$} & $\mathrm{RMSE}$ & $\mathrm{CI}_{0.95}$ &  & $\theta_{\ast}-\theta_{0}$ & \multicolumn{1}{c}{$V_{\ast}$} & $\mathrm{RMSE}$ & $\mathrm{CI}_{0.95}$\\\hline
& \multicolumn{4}{c}{$T=4$} &  & \multicolumn{4}{c}{$T=6$}\\\cline{2-5}\cline{7-10}
$0$ & $0.6218$ & $4.3158$ & $0.6252$ & $0.0000$ & \multicolumn{1}{r}{} & $0.4940$ & $2.8654$ & $0.4969$ & $0.0000$\\
$1$ & $0.2514$ & $4.4379$ & $0.2601$ & $0.0348$ & \multicolumn{1}{r}{} & $0.1278$ & $2.8694$ & $0.1386$ & $0.3351$\\
$2$ & $0.0498$ & $4.7217$ & $0.0849$ & $0.8880$ & \multicolumn{1}{r}{} & $0.0022$ & $3.0428$ & $0.0552$ & $0.9498$\\
$3$ & $-0.0244$ & $4.9279$ & $0.0743$ & $0.9360$ & \multicolumn{1}{r}{} & $-0.0247$ & $3.1636$ & $0.0614$ & $0.9276$\\
$4$ & $-0.0434$ & $5.0908$ & $0.0835$ & $0.9066$ & \multicolumn{1}{r}{} & $-0.0257$ & $3.2518$ & $0.0626$ & $0.9263$\\
$5$ & $-0.0435$ & $5.2366$ & $0.0844$ & $0.9076$ & \multicolumn{1}{r}{} & $-0.0221$ & $3.3188$ & $0.0617$ & $0.9330$\\
$6$ & $-0.0385$ & $5.3672$ & $0.0828$ & $0.9178$ & \multicolumn{1}{r}{} & $-0.0187$ & $3.3687$ & $0.0610$ & $0.9380$\\
$7$ & $-0.0330$ & $5.4795$ & $0.0811$ & $0.9269$ & \multicolumn{1}{r}{} & $-0.0164$ & $3.4052$ & $0.0606$ & $0.9409$\\
$8$ & $-0.0284$ & $5.5726$ & $0.0799$ & $0.9333$ & \multicolumn{1}{r}{} & $-0.0148$ & $3.4321$ & $0.0604$ & $0.9427$\\
$9$ & $-0.0247$ & $5.6478$ & $0.0791$ & $0.9375$ & \multicolumn{1}{r}{} & $-0.0138$ & $3.4523$ & $0.0603$ & $0.9437$\\
$10$ & $-0.0220$ & $5.7077$ & $0.0787$ & $0.9402$ & \multicolumn{1}{r}{} & $-0.0131$ & $3.4679$ & $0.0603$ & $0.9443$\\
$20$ & $-0.0150$ & $5.9317$ & $0.0785$ & $0.9457$ & \multicolumn{1}{r}{} & $-0.0099$ & $3.5369$ & $0.0603$ & $0.9468$\\
$40$ & $-0.0136$ & $6.0276$ & $0.0788$ & $0.9465$ & \multicolumn{1}{r}{} & $-0.0066$ & $3.5811$ & $0.0602$ & $0.9486$\\
$60$ & $-0.0123$ & $6.0717$ & $0.0789$ & $0.9472$ & \multicolumn{1}{r}{} & $-0.0052$ & $3.5984$ & $0.0602$ & $0.9491$\\
$80$ & $-0.0110$ & $6.1029$ & $0.0789$ & $0.9477$ & \multicolumn{1}{r}{} & $-0.0046$ & $3.6070$ & $0.0602$ & $0.9493$\\
$100$ & $-0.0100$ & $6.1281$ & $0.0789$ & $0.9481$ & \multicolumn{1}{r}{} & $-0.0042$ & $3.6126$ & $0.0603$ & $0.9494$\\
$200$ & $-0.0069$ & $6.2028$ & $0.0791$ & $0.9491$ & \multicolumn{1}{r}{} & $-0.0036$ & $3.6283$ & $0.0603$ & $0.9496$\\
$400$ & $-0.0053$ & $6.2445$ & $0.0792$ & $0.9495$ & \multicolumn{1}{r}{} & $-0.0031$ & $3.6447$ & $0.0605$ & $0.9497$\\
$600$ & $-0.0050$ & $6.2586$ & $0.0793$ & $0.9495$ & \multicolumn{1}{r}{} & $-0.0028$ & $3.6564$ & $0.0605$ & $0.9498$\\
$800$ & $-0.0049$ & $6.2713$ & $0.0793$ & $0.9496$ & \multicolumn{1}{r}{} & $-0.0026$ & $3.6670$ & $0.0606$ & $0.9498$\\
$1000$ & $-0.0048$ & $6.2849$ & $0.0794$ & $0.9496$ & \multicolumn{1}{r}{} & $-0.0024$ & $3.6765$ & $0.0607$ & $0.9498$\\\hline\hline
\end{tabular}
\vspace{-0.2cm}
\caption{\label{ContinuousRegressor}
Asymptotic biases and variances, and approximate RMSEs and coverage rates
($n=1000$). The model is as in Example~\ref{ex:StaticPanel}
with standard normal errors, $X_{it}\sim\mathcal{N}(0.5,0.25)$, $\pi_{\mathrm{prior}}(\alpha)=\phi
(\alpha)$, $\pi_{0}(\alpha)=\phi(\alpha-1)$, and $\theta_{0}=1$. The pseudo-true
value $\theta_{\ast}$ is based on the integrated score.
}
\end{center}
\end{table}

\bigskip

\begin{table}[p]
\begin{center}
\begin{tabular}
[c]{rrrrrrrrrr}\hline\hline
$q$ & bias & $n\times$var & RMSE & ${\rm CI}_{0.95}$ &  & bias & $n\times$var & ${\rm RMSE}$ & ${\rm CI}_{0.95}$\\\hline
& \multicolumn{4}{c}{$T=4$} &  & \multicolumn{4}{c}{$T=6$}\\\cline{2-5}\cline{7-10}
$0$ & $0.6237$ & $4.3199$ & $0.6272$ & $0.0000$ & \multicolumn{1}{r}{} & $0.4936$ & $2.6107$ & $0.4962$ & $0.0000$\\
$1$ & $0.2524$ & $4.6806$ & $0.2615$ & $0.0230$ & \multicolumn{1}{r}{} & $0.1262$ & $2.6395$ & $0.1363$ & $0.3390$\\
$2$ & $0.0502$ & $5.1142$ & $0.0874$ & $0.8830$ & \multicolumn{1}{r}{} & $0.0002$ & $2.8155$ & $0.0531$ & $0.9660$\\
$3$ & $-0.0241$ & $5.3716$ & $0.0772$ & $0.9210$ & \multicolumn{1}{r}{} & $-0.0268$ & $2.9237$ & $0.0603$ & $0.9270$\\
$4$ & $-0.0430$ & $5.5593$ & $0.0861$ & $0.8910$ & \multicolumn{1}{r}{} & $-0.0278$ & $2.9999$ & $0.0614$ & $0.9270$\\
$5$ & $-0.0429$ & $5.7236$ & $0.0870$ & $0.8940$ & \multicolumn{1}{r}{} & $-0.0242$ & $3.0578$ & $0.0603$ & $0.9310$\\
$6$ & $-0.0377$ & $5.8700$ & $0.0854$ & $0.9040$ & \multicolumn{1}{r}{} & $-0.0208$ & $3.1013$ & $0.0594$ & $0.9360$\\
$7$ & $-0.0321$ & $5.9960$ & $0.0838$ & $0.9130$ & \multicolumn{1}{r}{} & $-0.0184$ & $3.1334$ & $0.0589$ & $0.9390$\\
$8$ & $-0.0272$ & $6.1008$ & $0.0827$ & $0.9190$ & \multicolumn{1}{r}{} & $-0.0168$ & $3.1572$ & $0.0586$ & $0.9420$\\
$9$ & $-0.0235$ & $6.1859$ & $0.0821$ & $0.9280$ & \multicolumn{1}{r}{} & $-0.0158$ & $3.1751$ & $0.0585$ & $0.9490$\\
$10$ & $-0.0207$ & $6.2543$ & $0.0817$ & $0.9280$ & \multicolumn{1}{r}{} & $-0.0150$ & $3.1891$ & $0.0584$ & $0.9500$\\
$20$ & $-0.0132$ & $6.5187$ & $0.0818$ & $0.9380$ & \multicolumn{1}{r}{} & $-0.0118$ & $3.2520$ & $0.0582$ & $0.9510$\\
$40$ & $-0.0117$ & $6.6324$ & $0.0823$ & $0.9400$ & \multicolumn{1}{r}{} & $-0.0084$ & $3.2934$ & $0.0580$ & $0.9550$\\
$60$ & $-0.0102$ & $6.6792$ & $0.0824$ & $0.9400$ & \multicolumn{1}{r}{} & $-0.0070$ & $3.3097$ & $0.0580$ & $0.9550$\\
$80$ & $-0.0089$ & $6.7104$ & $0.0824$ & $0.9390$ & \multicolumn{1}{r}{} & $-0.0063$ & $3.3178$ & $0.0579$ & $0.9560$\\
$100$ & $-0.0078$ & $6.7350$ & $0.0824$ & $0.9410$ & \multicolumn{1}{r}{} & $-0.0060$ & $3.3229$ & $0.0580$ & $0.9560$\\
$200$ & $-0.0046$ & $6.8060$ & $0.0826$ & $0.9450$ & \multicolumn{1}{r}{} & $-0.0054$ & $3.3360$ & $0.0580$ & $0.9570$\\
$400$ & $-0.0029$ & $6.8440$ & $0.0828$ & $0.9440$ & \multicolumn{1}{r}{} & $-0.0049$ & $3.3463$ & $0.0581$ & $0.9570$\\
$600$ & $-0.0025$ & $6.8576$ & $0.0828$ & $0.9440$ & \multicolumn{1}{r}{} & $-0.0045$ & $3.3526$ & $0.0581$ & $0.9580$\\
$800$ & $-0.0024$ & $6.8704$ & $0.0829$ & $0.9450$ & \multicolumn{1}{r}{} & $-0.0043$ & $3.3587$ & $0.0581$ & $0.9590$\\
$1000$ & $-0.0023$ & $6.8843$ & $0.0830$ & $0.9450$ & \multicolumn{1}{r}{} & $-0.0041$ & $3.3647$ & $0.0581$ & $0.9580$\\\hline\hline
\end{tabular}
\vspace{-0.2cm}
\caption{\label{ContinuousRegressorSimulations}
Simulation results for $n=1000$. Setup as in Table~\ref{ContinuousRegressor}. Results based on 1000 Monte Carlo replications.
}
\end{center}
\end{table}

\begin{table}[p]
\begin{center}
\begin{tabular}{ccrrrrrrrr}
\hline\hline
   $\pi_0$  & $q$ &    \multicolumn{1}{c}{$T=64$} &    \multicolumn{1}{c}{$T=128$} &
\multicolumn{1}{c}{$T=256$} & \multicolumn{1}{c}{$T=512$} & & \multicolumn{1}{c}{$T=128$} &
\multicolumn{1}{c}{$T=256$} & \multicolumn{1}{c}{$T=512$} \\
\multicolumn{2}{c}{} & \multicolumn{4}{c}{$b_T$} &  & \multicolumn{3}{c}{$b_{T/2}/b_T$}\\
\cline{1-6}\cline{8-10}
  $\delta_0$   &0&   $0.0219$ & $0.0110$ & $0.0055$ & $0.0028$ & & $1.99$ & $2.00$ & $2.00$ \\ 
      &1&   $0.0_{3}94$ & $0.0_{3}23$ & $0.0_{4}57$ & $0.0_{4}14$ & & $4.09$ & $4.05$ & $4.03$ \\ 
      &2&   $-0.0_{4}14$ & $-0.0_{5}27$ & $-0.0_{6}39$ & $-0.0_{7}52$ & & $5.15$ & $6.88$ & $7.50$ \\ 
      &3&   $-0.0_{5}94$ & $-0.0_{6}49$ & $-0.0_{7}28$ & $-0.0_{8}16$ & & $19.30$ & $17.69$ & $16.98$ \\ \hline
  $\delta_1$   &0&   $0.1157$ & $0.0594$ & $0.0301$ & $0.0151$ & & $1.95$ & $1.98$ & $1.99$ \\ 
      &1&   $0.0051$ & $0.0014$ & $0.0_{3}35$ & $0.0_{4}89$ & & $3.75$ & $3.90$ & $3.96$ \\ 
      &2&   $0.0_{3}90$ & $0.0_{4}91$ & $0.0_{5}93$ & $0.0_{5}11$ & & $9.87$ & $9.85$ & $8.52$ \\ 
      &3&   $0.0_{3}13$ & $-0.0_{3}24$ & $-0.0_{5}13$ & $-0.0_{7}41$ & & $-5.58$ & $18.39$ & $31.73$ \\ \hline
  $\delta_2$   &0&   $0.4635$ & $0.27621$ & $0.1540$ & $0.0819$ & & $1.68$ & $1.79$ & $1.88$ \\ 
      &1&   $-0.0521$ & $-0.0137$ & $-0.0025$ & $-0.0_{3}40$ & & $3.81$ & $5.40$ & $6.37$ \\ 
      &2&   $-0.0218$ & $0.0_{3}30$ & $0.0012$ & $0.0_{3}29$ & & $-73.60$ & $0.24$ & $4.20$ \\ 
      &3&   $-0.0045$ & $0.0037$ & $0.0012$ & $0.0_{3}10$ & & $-1.24$ & $3.05$ & $11.70$ \\ \hline
  $U[0.5,1.5]$ 
      &0&    $0.1065$ & $0.0548$ & $0.0278$ & $0.0140$ & & $1.94$ & $1.97$ & $1.99$ \\ 
      &1&   $0.0043$ & $0.0012$ & $0.0_{3}31$ & $0.0_{4}79$ & & $3.61$ & $3.84$ & $3.93$ \\ 
      &2&   $0.0_{3}87$ & $0.0_{3}13$ & $0.0_{4}13$ & $0.0_{5}13$ & & $6.91$ & $9.60$ & $9.76$ \\ 
      &3&   $0.0_{3}34$ & $0.0_{5}48$ & $-0.0_{5}28$ & $-0.0_{6}20$ & & $70.53$ & $-1.68$ & $14.56$ \\ \hline
 $U[0,2]$ 
      &0&   $0.0863$ & $0.0446$ & $0.0227$ & $0.0114$ & & $1.94$ & $1.97$ & $1.98$ \\ 
      &1&   $0.0022$ & $0.0_{3}68$ & $0.0_{3}20$ & $0.0_{4}53$ & & $3.27$ & $3.47$ & $3.70$ \\ 
      &2&   $0.0_{3}37$ & $0.0_{4}14$ & $0.0_{4}30$ & $0.0_{5}42$ & & $2.66$ & $4.66$ & $7.10$ \\ 
      &3&   $0.0_{3}33$ & $0.0_{4}89$ & $0.0_{5}99$ & $-0.0_{8}88$ & & $3.73$ & $8.97$ & $-1130.78$ \\ \hline\hline
\end{tabular}
\vspace{-0.2cm}
\caption{\label{Rates}
Asymptotic bias rates. The model is as in Example~\ref{ex:StaticPanel}
with standard normal errors, $T_0=T_1=T/2$, $\pi_{\mathrm{prior}}(\alpha)=\phi
(\alpha)$, and $\pi_{0}$ as given in the table. The left part of the table gives $b_T=\theta_*-\theta_0$ for given $T$ and $q$. The three rightmost columns give the ratio $b_{T/2}/b_T$.
Notation: $\delta_z$ is the distribution with all mass at $z$; 
$0_{r}=\underset{r\text{ zeros}}{\underbrace{00\ldots0}}$.
}
\end{center}    
\end{table}

\newpage
\section{Some further remarks and ideas}
\label{sec:Extensions}

\subsection{Alternative bias correction methods}
\label{subsec:AlternativeBC}

Let $\widehat \alpha(y,x,\theta) :=  \arg\max_{\alpha\in{\cal A}}f\left(y \, \big| \, x, \alpha, \theta \right)$
be the MLE of $\alpha$ for fixed $\theta$. Define
$ \widetilde  Q(\widetilde y \, | \, y,x,\theta)
  := f\left(\widetilde y \, \big| \, x, \widehat \alpha(y,x,\theta), \theta \right)$,
and let
 $ \widetilde  Q(x,\theta) $ be the 
$n_{\cal Y} \times n_{\cal Y}$ matrix with elements $ \widetilde  Q_{k,\ell}(x,\theta)  =  \widetilde  Q(y_{(k)} \, | \, y_{(\ell)},x,\theta)$,
for $k,\ell \in \{1,\ldots,n_{\cal Y}\}$.

Instead of implementing the bias correction of the score as in \eqref{S1Formula}, one could alternatively consider
\begin{align}
    \widetilde s^{(1)}(y,x,\theta) &:= s(y,x,\theta) -  \sum_{\widetilde y \in {\cal Y}}  s(\widetilde y,x,\theta)   \,  f\left(\widetilde y \, \big| \, x, \widehat \alpha(y,x,\theta) , \theta \right)   
    \nonumber  \\
      &= s(y,x,\theta) -  \sum_{\widetilde y \in {\cal Y}}  s(\widetilde y,x,\theta) \, \widetilde Q(\widetilde y \, | \, y,x,\theta) 
     \nonumber \\
      &=   S(x,\theta) \, \left[ \mathbbm{I}_{n_{\cal Y}} -  \widetilde  Q(x,\theta) \right] \, \delta(y) \, .
       \label{S1Formula2}
\end{align}
This alternative bias correction method is very natural: We simply have subtracted from the original score the expression for the bias in the first line of
\eqref{InfeasibleBiasCorrection}, and replaced the unknown $A$ with the estimator $\widehat \alpha(y,x,\theta)$.
In fact, this is exactly the ``profile-score adjustment'' to the score function that is suggested in \cite{dhaene2015profile}.

The expression in \eqref{S1Formula2} is identical to that in \eqref{S1Formula}, except that $Q(x,\theta)$
is replaced with $\widetilde  Q(x,\theta)$. Iterating this alternative bias correction $q$ times therefore also gives   the formula in
\eqref{GeneralIteration} with $Q(x,\theta)$ replaced with $\widetilde  Q(x,\theta)$. 
Thus, by the same arguments as before, for large values of $q$, 
the corresponding score function $\widetilde 
 s^{(q)}(x,\theta)$ will be dominated by contributions from eigenvectors of $\widetilde  Q(x,\theta)$
that correspond to eigenvalues close to or equal to zero.

It is therefore  natural to ask why in our presentation above we have chosen the bias correction in  \eqref{S1Formula} based 
on the posterior distribution of $A$ instead of the bias correction in \eqref{S1Formula2} based on the MLE of $A$.
The answer is that the matrix $\widetilde  Q(x,\theta)$ does not have the same convenient algebraic properties as the matrix 
$Q(x,\theta)$. In particular, none of the parts (i), (ii), (iii) of Lemma~\ref{ExactFD} would hold if we replaced $Q(x,\theta)$ by $\widetilde  Q(x,\theta)$,
implying that the close relationship between the bias correction in \eqref{S1Formula2} and functional differencing does not generally hold for the alternative bias correction discussed here. 

To explain why $\widetilde  Q(x,\theta)$ does not have these properties, consider the following.
For given values of $x$ and $\theta$, assume that there exist two outcomes $y$ and $\bar y$
that give the same MLE of $A$, that is, 
 $\widehat \alpha(y,x,\theta)  = \widehat \alpha(\bar y,x,\theta)$. Then,
 the two columns of  $\widetilde  Q(x,\theta)$ that correspond to $y$ and $\bar y$ are identical, and therefore 
 $\widetilde  Q(x,\theta)$ 
 does not have full rank, implying that it has
a zero eigenvalue.
 The existence of this zero eigenvalue is simply a consequence of $\widehat \alpha(y,x,\theta)  = \widehat \alpha(\bar y,x,\theta)$.
 
 Now, in models where there exists a sufficient statistic for $A$ (conditional on $X$), if the outcomes $y$ and $\bar y$ have the same value of the sufficient statistic, then $\widehat \alpha(y,x,\theta)  = \widehat \alpha(\bar y,x,\theta)$, 
 and in that case the zero eigenvalue of $\widetilde  Q(x,\theta)$  just discussed
 is closely related to functional differencing because the existence of the sufficient statistic generates 
 valid moment functions; recall the example in equation \eqref{MomentFctSufficient}.
 
 However, we may also have  $\widehat \alpha(y,x,\theta)  = \widehat \alpha(\bar y,x,\theta)$  for reasons
 that have nothing to do with functional differencing. For example, consider Example~\ref{ex:StaticPanel2}
 with normally distributed errors, $T \geq 2$ even, and $T_0=T_1=T/2$.
 Then,
all outcomes $y$ with $y_0 + y_1 = T/2$ have $\widehat \alpha(y,\theta)=-\theta/2$. 
So there are at least $1+T/2$ different outcomes with the same value of 
$\widehat \alpha(y,\theta)$, which 
 implies that  $\widetilde  Q(\theta)$ has at least $T/2$ zero eigenvalues. But we have found numerically that  $Q(\theta)$ does not have
zero eigenvalues in this example for $\theta \neq 0$, that is, according to Lemma~\ref{ExactFD}, no exact moment function exists.

This example shows that $Q(x,\theta)$ and $\widetilde  Q(x,\theta)$ have different 
algebraic properties, and 
it explains why we have focused on $Q(x,\theta)$ instead of $\widetilde  Q(x,\theta)$ in our discussion.
Nevertheless, the observation that bias correction can be iterated using
  the formula in \eqref{GeneralIteration} can be useful for alternative methods as well.

\subsection{Alternative ways to implement approximate functional differencing} 
\label{subsec:AlternativeImplementations}

Instead of choosing $s^{(q)}(y,x,\theta)$  as moment function to estimate $\theta_0$, we could alternatively
choose the moment function $s_h(y,x,\theta)$ defined 
by \eqref{DefhQ} and \eqref{GeneralScoreFunctions}
for some other function $h:[0,1] \rightarrow \mathbb{R}$.
In particular, one very natural relaxation of 
$s_{\infty}(y,x,\theta)$ 
and  $h_\infty(\lambda) =    \mathbbm{1}\{  \lambda = 0 \}$
would be to choose
 \begin{align*}
   h(\lambda)  &= K(\lambda/c) \, ,
\end{align*}   
for some soft-thresholding function
$K : [0,\infty) \rightarrow [0,\infty)$,
for example, $K(\xi)=\exp(-\xi)$.
The tuning parameter $q \in \{0,1,2,\ldots\}$   is replaced here
by the bandwidth parameter $c>0$, which specifies which
eigenvalues of $Q(\theta,x)$ are considered to be  close to zero. 
Regarding the thresholding function, one could 
in principle consider a simple indicator function
$K(\xi)=\mathbbm{1}\{ \xi\leq 1 \}$, but 
since this function is discontinuous,
the resulting score function $s_{h}(y,x,\theta)$ defined
in \eqref{GeneralScoreFunctions} would then
be discontinuous in $\theta$, so we would not recommend
this.

Another possibility to implement approximate functional differencing is to replace  the set ${\cal A}$ by
a finite set ${\cal A}_*$ with cardinality $n_{\cal A}$ less than
$n_{\cal Y}$. As explained in Remark 2 above, after this replacement,
the matrix $Q(x,\theta)$ will have at least $n_{\cal Y}- n_{\cal A}$ zero eigenvalues,
that is, one can then use the moment function
$s_{\infty}(y,x,\theta)$ defined in \eqref{SCOREinf}
to implement the MM or GMM estimator. In this case, the
key tuning parameter to choose is the number of points $n_{\cal A}$ in the set ${\cal A}_*$ that ``approximates'' ${\cal A}$.

\subsection{Average effect estimation}
\label{sec:AverageEffects}

In models of the form \eqref{model2} we are often not only
interested in the unknown $\theta_0$ but also in 
functionals of the unknown $\pi_0(\alpha|x)$. 
In particular, consider average effects of the form
\begin{align*}
   \mu_0 &= \mathbbm{E}\left[ \mu(X,A,\theta_0) \right]
    = \mathbbm{E}\left[ \int_{\cal A}   \mu(X, \alpha , \theta_0)  \,   \pi_0(\alpha\,|\,X) \, {\rm d}\alpha \right] ,
\end{align*}
where $\mu( x,\alpha , \theta)$ is a known function that 
specifies the average effect of interest. 
For example, in a panel data model,
if we are interested in the average partial effect with respect to 
the $p$-th regressor in period $t$, we could choose
$\mu( x,\alpha , \theta) = \frac{\partial}{\partial x_{t,p}} \sum_{y \in {\cal Y}} y_t f(y | x, \theta,\alpha)  $.
For other examples of functionals of the individual-specific effects, see e.g.\ \cite{arellano2012identifying}.

We now focus on the problem of
estimating $\mu_0$. Therefore, in this subsection, we assume that
the problem of estimating $\theta_0$ is already resolved (with corresponding estimator $\widehat \theta)$,
and we focus on the problem that 
$\pi_0(\alpha|x)$ is unknown when estimating average effects $\mu_0$.

Analogously to the iterated bias-corrected score functions
$s^{(q)}(y,x,\theta) $ in \eqref{GeneralIteration},
we want to define a sequence of estimating functions
$w^{(q)}(y,x,\theta) $, $q=0,1,2,\ldots$, such that, for some $q$,
$$
    \mu_*^{(q)} :=  \mathbbm{E}\left[ w^{(q)}(Y,X,\theta_0) \right]
$$
is close to $\mu_0 $. The corresponding estimator of $\mu_0$ is
$$
   \widehat \mu^{(q)} := \frac 1 n \sum_{i=1}^n 
   w^{(q)} ( Y_i,X_i,\widehat \theta) \, .
$$
Using the posterior distribution in
\eqref{DefPost}, a natural baseline estimating
function ($q=0$) is
\begin{align*}
     w^{(0)}(y,x,\theta) := 
   \int_{\cal A}   \mu( x, \alpha , \theta)  \,    \pi_{\rm post}(\alpha \,|\, y,x,\theta)  \, {\rm d}\alpha \, .
\end{align*}  
The corresponding estimator $\widehat \mu^{(0)}$ of $\mu_0$ 
can again be motivated by ``large-$T$'' panel data considerations,
where, under regularity conditions, the 
posterior distribution concentrates around the true value $A$
as $T \rightarrow \infty$.

Let $W(x,\theta)$  be the  $n_{\cal Y}$-vector with entries $w^{(0)}(y_{(k)},x,\theta)$,
$k=1,\ldots,n_{\cal Y}$. Then, the analog of the 
limiting estimating function in \eqref{SCOREinf},
corresponding to $q \rightarrow \infty$, for average effects is
\begin{align}
     w^{(\infty)}(y,x,\theta) &:= 
   W'(x,\theta) \,   Q^\dagger(x,\theta)  \, \delta(y)
   \nonumber \\
   &= W'(x,\theta) \;   \widetilde h^{(\infty)}[ Q(x,\theta) ]  \; \delta(y) \, ,
   &
     \widetilde h^{(\infty)}(\lambda) 
    &:= \left\{
    \begin{array}{ll}
    \lambda^{-1}
    & \text{for } \lambda>0 ,
    \\
    0
     & \text{for } \lambda=0 ,
    \end{array} \right.
   \label{Winfty}
\end{align}
where $Q^\dagger(x,\theta) $ is a pseudo-inverse of $Q(x,\theta) $, and the application of a function $ \widetilde h^{(\infty)}:[0,1] \rightarrow \mathbb{R}$
to the matrix $Q(x,\theta)$ was defined
in equation \eqref{DefhQ}.
The motivation
for choosing $w^{(\infty)}(y,x,\theta)$ in this way is that
it gives an unbiased estimator of the average effect
(i.e., $\mu_*^{(\infty)} = \mu_0$) whenever we can write
$\mu( x, \alpha , \theta) =
\sum_{y \in {\cal Y}} 
 \nu(y,x,\theta)
 f\left(y \, \big| \, x, \alpha, \theta \right)$
 for some function $\nu(y,x,\theta)$.\footnote{
 This is because in that special case we have $W'(x,\theta)=N'(x,\theta) \, Q(x,\theta)$,
where $N(x,\theta)$  is the  $n_{\cal Y}$-vector with entries $\nu(y_{(k)},x,\theta)$, and therefore
$  \mathbb{E}\left[  w^{(\infty)}(Y,X,\theta_0)  \, \big| \, X=x, \, A = \alpha  \right]  =  N'(x,\theta_0) \, Q(x,\theta_0) \,   Q^\dagger(x,\theta_0)   \mathbb{E}\left[ \delta(Y)   \, \big| \, X=x, \, A = \alpha  \right] =  N'(x,\theta_0) \,   \mathbb{E}\left[ \delta(Y)   \, \big| \, X=x, \, A = \alpha  \right] = \mu( x, \alpha , \theta_0)$.}
Of course, average effects with this form
of $\mu( \alpha , x, \theta)$ are a very special case, but
they are usually the only cases for which we can
expect unbiased estimation of the average effect to be feasible
(for fixed $T$); see also
\cite{aguirregabiria2021identification}.
Notice that we do not assume here that $\mu( \alpha , x, \theta)$ is of this form,
it is just used to motivate \eqref{Winfty}.
 
As we have seen before, the non-zero eigenvalues 
of $Q(x,\theta) $ can be very small, which implies that the pseudo-inverse
$Q^\dagger(x,\theta) $ can have very large elements. The
corresponding estimator $\widehat \mu^{(\infty)}$ based on \eqref{Winfty} therefore
typically has a very large variance and we do not recommend
this estimator in practice. Instead,
to balance the bias-variance trade-off 
of the average effect estimator, some
regularization of the pseudo-inverse of $Q(x,\theta) $
in \eqref{Winfty} is required. 
There are various ways to implement
regularization, in the same way that there
are various ways to implement
approximate functional differencing
(see Section~\ref{subsec:AlternativeImplementations}).

Here, regularization means that
we want to find
functions $\widetilde h_q(\lambda)$ that
approximate the inverse function $1/\lambda$
well for large values of $\lambda \in [0,1]$, but that deviate from 
$1/\lambda$ for values of $\lambda$
close to zero to avoid divergence.\footnote{
In previous sections, the functions $h_q(\lambda) =  (1-\lambda)^q$ were polynomial approximations
of (rescaled versions of) the function
$h_\infty(\lambda)=\mathbbm{1}\{  \lambda = 0 \}$.
The regularization that is analogous
to $ s^{(q)}(y,x,\theta)$ in \eqref{GeneralIteration}
is given by a $q$-th order Taylor expansion of
the function $1/\lambda$ around $\lambda=1$.}
This gives,\footnote{
Here, we use the convention that $0^0=1$,
which also implies that $\left[\mathbbm{I}_{n_{\cal Y}} - Q(x,\theta)
     \right]^0 = \mathbbm{I}_{n_{\cal Y}}$ even though $Q(x,\theta)$ has an eigenvalue equal to one.
 Also, there is some ambiguity in 
 what value we should assign to $\widetilde h_q(\lambda) $ for $\lambda=0$. 
 We choose  $\widetilde h_q(0)=q+1$ because
 it results in the simple polynomial expression
 \eqref{WqGeneral} for  $\widetilde h_q[ Q(x,\theta) ]$, which is convenient since
 $\widetilde h_q[ Q(x,\theta) ]$ can be evaluated
 without ever calculating the eigenvalues
 and eigenvectors of $Q(x,\theta)$.
 However, if we want to obtain
 $w^{(\infty)}(y,x,\theta) $ in \eqref{Winfty} as
 the limit of $w^{(q)}(y,x,\theta)$
 as $q \rightarrow \infty$, then we should
 assign $\widetilde h_q(0)=0$ for $\lambda=0$, but this would deviate
 from the polynomial expression.
}
for $q\in\{0,1,2,\ldots\}$,
\begin{align}
  \widetilde h_q(\lambda)
   &=  
   \sum_{r=0}^q (1-\lambda)^r
   =\left\{
    \begin{array}{ll}
    \frac{ 1 - (1-\lambda)^{q+1} } {\lambda}
    & \text{for } \lambda>0 ,
    \\
    q+1 
     & \text{for } \lambda=0 .
    \end{array} \right.  
   \label{TildeHpolynomial}
\end{align}
The corresponding estimating function
that regularizes $w^{(\infty)}(y,x,\theta)$ is 
therefore given by
\begin{align}
     w^{(q)}(y,x,\theta) &:= 
   W'(x,\theta) \;   \widetilde h_q[ Q(x,\theta) ]  \; \delta(y),
   &
     \widetilde h_q[ Q(x,\theta) ]
 = \sum_{r=0}^q  \left[\mathbbm{I}_{n_{\cal Y}} - Q(x,\theta)
     \right]^r  \, .
   \label{WqGeneral}
\end{align}
This is a polynomial in $ Q(x,\theta)$,
as  was the case for
$ s^{(q)}(y,x,\theta)$.
Choosing a value of $q$ that is not too large
  therefore ensures that the variance of
 the corresponding estimator  $\widehat \mu^{(q)}$
 remains reasonably small (for fixed $q$), because we don't
 need the pseudo-inverse of $Q(x,\theta)$.

 Note also that $ w^{(q)}(y,x,\theta) $
 and the corresponding estimators  $\widehat \mu^{(q)}$
 have a large-$T$ bias-correction interpretation
 very similar to $ s^{(q)}(y,x,\theta) $. For example,
we have $\widetilde h_1(\lambda)=2-\lambda$,
and therefore 
 $$ w^{(1)}(y,x,\theta) = 2 \, w^{(0)}(y,x,\theta)
  - 
   W'(x,\theta) \;   Q(x,\theta)  \; \delta(y) \, .
$$   
We conjecture that
the estimator of $\mu_0$ corresponding to only
$ W'(x,\theta) Q(x,\theta)  \delta(y) $
has twice the leading order $1/T$ asymptotic
bias of the estimator  $\widehat \mu^{(0)}$ 
corresponding to $w^{(0)}(y,x,\theta)$, that is,
$ w^{(1)}(y,x,\theta) $ is exactly the jackknife
linear combination that eliminates the large-$T$ leading
order bias in $\widehat \mu^{(0)}$; see \cite{DhaeneJochmans2015}.
Appropriate iterations of this jackknife bias
correction also give the estimating functions  $w^{(q)}(y,x,\theta)$ for $q>1$.
 
We are not considering average effects further here.
But we found it noteworthy that 
there is a formalism for average effect
calculation that closely mirrors the development
of approximate functional differencing for the 
estimation of $\theta_0$ introduced above.
However, this does not imply that we expect the results
for average-effect estimation to be necessarily similar to 
those for the estimation of the common parameters $\theta_0$.
In particular, for small values of $T$, the identified set for the average effects in discrete-choice panel data  models tends to be much larger  than the identified set of the common parameters
(see, e.g.,\ \citealt{chernozhukov2013average}, \citealt{davezies2021identification}, \citealt{LiuPoirierShiu21}, and \citealt{pakel2021bounds}). Therefore we expect larger values of $T$
to be required for the point estimators $\widehat \mu^{(q)}$
to perform well, and we also expect the bias-variance trade-off 
in the choice of $q$ to be quite different.
For a closely related discussion
see \cite{bonhomme2017panel},
and also the section on ``Average marginal effects'' in the
2010 working paper version
of \cite{bonhomme2012functional}.

\section{Conclusions}
\label{sec:conclusions}

We have linked 
the large-$T$ panel data literature with the functional differencing
method through a bias correction that converges
to functional differencing when iterated.
Our numerical
illustrations show that in models where exact functional 
differencing is not possible, one may still apply
it approximately to obtain estimates that can be essentially 
unbiased, even when the number of time periods $T$ is small.

The key element in our construction
is the $n_Y \times n_Y$ matrix
$Q(x,\theta)$. 
The eigenvalues of this matrix 
are informative about whether (approximate) functional 
differencing is applicable in a given model.
The matrix $Q(x,\theta)$ also features prominently in our 
bias-corrected score functions in \eqref{GeneralIteration}
and in our regularized estimating functions for
average effects in \eqref{WqGeneral}.
We have assumed a discrete outcome space with a finite number of elements $n_Y$. When the outcome space is infinite, the matrix 
$Q(x,\theta)$ has to be replaced by  the
corresponding operator.

The goal of this paper was primarily to introduce 
and illustrate an approximate version of functional differencing.
Future work is needed to better understand the properties 
of the method and to explore its usefulness in empirical work,
both for the estimation of common parameters,
which was our primary focus, 
and for the estimation of average effects, briefly introduced in Section~\ref{sec:AverageEffects}.


\appendix

\section{Proofs}

\begin{proof}[\bf Proof of Lemma~\ref{lemma:MatrixQ}]
Define
\begin{align*}
    \overline Q(\widetilde y \, | \, y,x,\theta) =  \frac{\int_{\cal A} f \left(\widetilde y \, \big| \, x,\alpha,\theta \right)  f(y \,  | \,x,\alpha,\theta) \, \pi_{\rm prior}(\alpha\,|\,x)  {\rm d} \alpha}
    {\left[ p_{\rm prior}(\widetilde y \, |\, x,\theta) \right]^{1/2}
    \left[ p_{\rm prior}( y \, |\, x,\theta) \right]^{1/2} } 
\end{align*}
and let $\overline Q(x,\theta)$ be the $n_{\cal Y} \times n_{\cal Y}$ matrix with elements $\overline 
 Q_{k,\ell}(x,\theta)  = \overline  Q(y_{(k)} \, | \, y_{(\ell)},x,\theta)$. 
Also define the  $n_{\cal Y} \times n_{\cal Y}$ diagonal matrix
$$
   P_{\rm prior}( x,\theta)  = {\rm diag}\left[   p_{\rm prior}( y_{(k)} \, |\, x,\theta) \right]_{k=1,\ldots,n_{\cal Y}}   .
$$
From \eqref{DefPost} and \eqref{DefQ} we obtain
\begin{align*}
    Q( \widetilde y \, | \, y,x,\theta) 
    &=  \frac{\int_{\cal A} f \left( \widetilde y \, \big| \, x,\alpha,\theta \right)  f( y \,  | \,x,\alpha,\theta) \, \pi_{\rm prior}(\alpha\,|\,x)  {\rm d} \alpha} {p_{\rm prior}(y \, |\, x,\theta)}  
    \\
    &= 
    \left[ p_{\rm prior}( \widetilde y \, |\, x,\theta) \right]^{1/2}
  \,  \overline Q( \widetilde y \, | \, y,x,\theta) \,
     \left[ p_{\rm prior}( y \, |\, x,\theta) \right]^{-1/2} ,
\end{align*}
which in matrix notation is
\begin{align*}
    Q(x,\theta) 
      &= 
    \left[ P_{\rm prior}(  x,\theta) \right]^{1/2}
  \,  \overline Q(x,\theta) \,
     \left[ P_{\rm prior}(  x,\theta) \right]^{-1/2} .
\end{align*}
This shows that the matrices $Q(x,\theta)$ and $ \overline Q(x,\theta)$
are similar and therefore have the same eigenvalues.\footnote{%
          Two matrices $A$ and $B$ are similar if $B=P^{-1} A P$ for some nonsingular matrix $P$. Similar matrices have the same eigenvalues.
          }
The matrix        $ \overline Q(x,\theta)$ is symmetric and positive
semi-definite (by construction), which implies that all its eigenvalues
(and therefore all eigenvalues of $Q(x,\theta)$) are non-negative real numbers.
Furthermore,  $ \overline Q(x,\theta)$  is diagonalizable because it is
symmetric. Hence $  Q(x,\theta)$  is also diagonalizable,
because it is similar to $ \overline Q(x,\theta)$.\footnote{
 A matrix  is diagonalizable if and only if it is similar to a diagonal matrix. 
 Since  $ \overline Q(x,\theta)$ is similar to a diagonal matrix, and 
  $  Q(x,\theta)$  is  similar to $ \overline Q(x,\theta)$, it must also be
  the case that $  Q(x,\theta)$  is similar to a diagonal matrix.
}

 In addition,  $Q(x,\theta)$ is a stochastic matrix (by construction), 
 which implies that its spectral radius is equal to one, that is,
                   $Q(x,\theta)$ cannot have any eigenvalue larger than one.
                   We thus conclude that all eigenvalues of $Q(x,\theta)$  lie in the interval $[0,1]$.
\end{proof}

The following lemma is useful for the proof of Lemma~\ref{ExactFD},
which we present afterward.

\begin{lemma}
   \label{lemma:help}
    Let the assumptions of Lemma~\ref{ExactFD}
    hold. Let $w(y,x,\theta_0) \in \mathbb{R}$
    be such that
    $$ 
    \sum_{y \in {\cal Y}}
       w(y,x,\theta_0) \,Q\left( y \, \big| \,\widetilde y, x,\theta_0 \right)  = 0 \, , \qquad
       \textrm{for all } \widetilde y \in {\cal Y} \, .
    $$
    Then
    $$ 
       \sum_{y \in {\cal Y}}
       w(y,x,\theta_0) \,f \left(y \, \big| \, x,\alpha,\theta_0 \right)  = 0 \, ,
       \qquad\textrm{for all } \alpha \in {\cal A} \, .
    $$
\end{lemma}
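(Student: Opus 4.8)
The plan is to reduce the hypothesis to a statement about the single function $g(\alpha) := \sum_{y \in {\cal Y}} w(y,x,\theta_0)\, f(y \,|\, x,\alpha,\theta_0)$, which is precisely the quantity the conclusion asserts to vanish, and then to use a quadratic-form argument to force $g$ to be identically zero.

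First I would unfold the definition of $Q$. Combining \eqref{DefPost} and \eqref{DefQ}, and using that the hypotheses of Lemma~\ref{ExactFD} give $p_{\rm prior}(\widetilde y \,|\, x,\theta_0) > 0$ for every $\widetilde y$, one obtains $Q(y \,|\, \widetilde y, x,\theta_0) = [p_{\rm prior}(\widetilde y \,|\, x,\theta_0)]^{-1} \int_{\cal A} f(y \,|\, x,\alpha,\theta_0)\, f(\widetilde y \,|\, x,\alpha,\theta_0)\, \pi_{\rm prior}(\alpha \,|\, x)\, {\rm d}\alpha$. The crucial structural feature here is that the numerator is \emph{symmetric} in $y$ and $\widetilde y$. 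Clearing the (positive) denominator and interchanging the finite sum over $y$ with the integral, the hypothesis becomes $\int_{\cal A} g(\alpha)\, f(\widetilde y \,|\, x,\alpha,\theta_0)\, \pi_{\rm prior}(\alpha \,|\, x)\, {\rm d}\alpha = 0$ for every $\widetilde y \in {\cal Y}$.

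The one genuinely clever move is to pair this family of identities against the weights $w$ themselves: multiplying the identity for index $\widetilde y$ by $w(\widetilde y,x,\theta_0)$ and summing over $\widetilde y$ collapses the bracketed factor $\sum_{\widetilde y} w(\widetilde y,x,\theta_0)\, f(\widetilde y \,|\, x,\alpha,\theta_0)$ back into $g(\alpha)$, yielding $\int_{\cal A} [g(\alpha)]^2\, \pi_{\rm prior}(\alpha \,|\, x)\, {\rm d}\alpha = 0$. Since the integrand is nonnegative and $\pi_{\rm prior}(\alpha \,|\, x) > 0$ for all $\alpha$ by \eqref{ConditionPrior}, this forces $g(\alpha) = 0$ for (Lebesgue-)almost every $\alpha \in {\cal A}$.

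The final step upgrades ``almost everywhere'' to ``everywhere'', which is exactly where the continuity of $f(y \,|\, x,\alpha,\theta_0)$ in $\alpha$ (assumed in Lemma~\ref{ExactFD}) enters: $g$ is a finite linear combination of functions continuous in $\alpha$, hence itself continuous, and a continuous function vanishing off a null set vanishes identically. I expect the main obstacle to be simply spotting the quadratic-form pairing in the third step; once that is in hand, the positivity of the prior and the continuity of $f$ finish the argument with essentially no further effort.
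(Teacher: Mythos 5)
Your proposal is correct and is essentially the paper's own proof: the paper writes $Q(x,\theta_0)=\int_{\cal A}F\,F'\,\pi_{\rm prior}\,{\rm d}\alpha\;P_{\rm prior}^{-1}$ in matrix form and right-multiplies $W'Q=0$ by $P_{\rm prior}W$ to obtain $\int_{\cal A}(W'F)^2\,\pi_{\rm prior}\,{\rm d}\alpha=0$, which is exactly your quadratic-form pairing written in vector notation. The subsequent use of the positivity of the prior and the continuity of $f$ in $\alpha$ to pass from almost-everywhere to everywhere is identical.
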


\begin{proof}[\bf Proof of Lemma~\ref{lemma:help}]
    The $n_{\cal Y} \times n_{\cal Y}$ diagonal matrix $P_{\rm prior}( x,\theta_0)$
    was defined in the Proof of Lemma~\ref{lemma:MatrixQ}.
   In addition, 
    let $F(x,\alpha,\theta_0)$ and $W(x,\theta_0)$ be the $n_{\cal Y}$-vectors with elements 
$f(y_{(k)} \, | \, x,\alpha,\theta_0)$ and $w(y_{(k)},x,\theta_0)$, respectively,
for $k=1,\ldots,n_{\cal Y}$.
     Then
     \begin{align}
        Q(x,\theta_0) = 
        \int_{\cal A} F(x,\alpha,\theta_0) \, F'(x,\alpha,\theta_0) 
        \, \pi_{\rm prior}(\alpha\,|\,x) \, {\rm d}\alpha \, P^{-1}_{\rm prior}( x,\theta_0) \, ,
        \label{QmatrixMat}
     \end{align}
    and the condition on  $w(y,x,\theta_0)$ in the lemma can be written as
    $$
        W'(x,\theta_0) \, Q(x,\theta_0) = 0 \, .
    $$
    Plugging in the expression for
     $Q(x,\theta_0)$ in \eqref{QmatrixMat} and
    multiplying   with $P_{\rm prior}( x,\theta_0) \, W(x,\theta_0) $
    from the right gives
    $$
        \int_{\cal A} W'(x,\theta_0) \, F(x,\alpha,\theta_0) \, F'(x,\alpha,\theta_0) 
         \, W(x,\theta_0) \, \, \pi_{\rm prior}(\alpha\,|\,x) \, {\rm d}\alpha  = 0 \, .
    $$
    Since   $W'(x,\theta_0) \, F(x,\alpha,\theta_0) \, F'(x,\alpha,\theta_0) 
         \, W(x,\theta_0) \geq 0$
   and $\pi_{\rm prior}(\alpha\,|\,x)>0$  we conclude that
   \begin{align}
     W'(x,\theta_0) \, F(x,\alpha,\theta_0) \, F'(x,\alpha,\theta_0) 
         \, W(x,\theta_0) = 0 \, ,   \label{WFFW}
   \end{align}
for almost all values $\alpha$, except possibly for a set of values $\alpha$ that has measure
   zero under $\pi_{\rm prior}(\alpha\,|\,x)$. However,  
   since $f\left(y \, \big| \, x, \alpha, \theta_0 \right)$
   is assumed to be continuous in $\alpha$, we conclude that \eqref{WFFW} must 
   hold for all $\alpha \in {\cal A}$, since any violation on a set of measure
   zero would require a discontinuity in $\alpha$. 
   Finally, \eqref{WFFW} also implies that 
   $$
      W'(x,\theta_0) \, F(x,\alpha,\theta_0) = 0 \, ,
   $$
   for all $\alpha \in {\cal A}$. This is what we wanted to show, just written in vector notation.
\end{proof}

\begin{proof}[\bf Proof of Lemma~\ref{ExactFD}]
   \underline{\# part (i):}
    Let $U_0(x,\theta_0) $ be the submatrix of $U(x,\theta_0)$
    that only contains those columns that are the right-eigenvectors of $Q(x,\theta_0)$ corresponding
    to the eigenvalues $\lambda_j(x,\theta_0) = 0$.
    We then have 
    $
       Q(x,\theta_0) \, U_0(x,\theta_0)  = 0.
    $
    Similarly, let $[U^{-1}(x,\theta_0)]_0$ be the submatrix
    of $U^{-1}(x,\theta_0)$ that only contains the  
    rows that are the left-eigenvectors of $Q(x,\theta_0)$ corresponding
    to the eigenvalues $\lambda_j(x,\theta_0) = 0$.
    We then have
    $$
          [U^{-1}(x,\theta_0)]_0 \, Q(x,\theta_0)  = 0 \, ,
    $$
    and according to Lemma~\ref{lemma:help} this implies
   \begin{align}
        [U^{-1}(x,\theta_0)]_0 \, F(x,\alpha,\theta_0) &= 0 \, .
        \label{VinvProperty}
   \end{align}
    Next, by using the definition of $ s_\infty(y,x,\theta_0)$
    and $ h_\infty[ Q(x,\theta_0) ]$ in the main text we find
   \begin{align*}
 s_\infty(y,x,\theta_0) &=  S(x,\theta_0)  \,  h_\infty[ Q(x,\theta_0) ]   \, \delta(y) 
 \\
  &=  S(x,\theta_0)  \, U(x,\theta_0)  \;   \diag \left( \big[ \mathbbm{1}\left\{  \lambda_j(x,\theta_0) = 0 \right\}  \big]_{j=1,\ldots,n_{\cal Y}} \right) \; U^{-1}(x,\theta_0)    \, \delta(y) 
 \\ 
  &=  S(x,\theta_0)  \, U_0(x,\theta_0)  \;    [U^{-1}(x,\theta_0)]_0    \, \delta(y) \, ,
\end{align*}
and therefore
\begin{align*}
    \mathbb{E}\left[  s_{\infty}(Y,X,\theta_0)  \, \big| \, X=x, \, A = \alpha  \right] &= S(x,\theta_0)  \, U_0(x,\theta_0)  \;    [U^{-1}(x,\theta_0)]_0    \, F(x,\alpha,\theta_0) 
     = 0 \, ,
\end{align*}
where in the last step we used \eqref{VinvProperty}.

\medskip
 
\underline{\# part (ii):}
Let  ${\mathfrak m}(x,\theta_0)$ be the $n_{\cal Y}$-vector with elements 
${\mathfrak m}(y_{(k)} , x, \theta_0)$,
$k=1,\ldots,n_{\cal Y}$.
Then, 
$\mathbb{E}\left[ {\mathfrak m}(Y , X, \theta_0) \, \big| \, X=x, \, A = \alpha  \right] = 0$
can be written in vector notation as
\begin{align}
   {\mathfrak m}'(x,\theta_0) \,  F(x,\alpha,\theta_0)
   &= 0 \, .
   \label{MomentVector}
\end{align}
From the expression of $ Q(x,\theta_0)$ in
\eqref{QmatrixMat} we see that this implies
${\mathfrak m}'(x,\theta_0) Q(x,\theta_0) = 0$,
that is, if \eqref{MomentVector} holds for
all $\alpha \in {\cal A}$, then $ Q(x,\theta_0)$
has a zero eigenvalue with corresponding left-eigenvector
${\mathfrak m}(x,\theta_0)$. This is the ``if'' part of the 
statement in part (ii) of the lemma.

Conversely, if $ Q(x,\theta_0)$ has a zero eigenvalue,
then let ${\mathfrak m}(x,\theta_0)$  be a corresponding
left-eigenvector. We then have ${\mathfrak m}'(x,\theta_0) Q(x,\theta_0) = 0$. According to Lemma~\ref{lemma:help}
this implies that \eqref{MomentVector} holds, or equivalently
that $\mathbb{E}\left[ {\mathfrak m}(Y , X, \theta_0) \, \big| \, X=x, \, A = \alpha  \right] = 0$. We have thus also shown the
``only if'' part of the 
statement in part (ii) of the lemma.

\medskip

\underline{\# part (iii):}
Let  ${\mathfrak m}(y , x, \theta_0) \in \mathbb{R}$  be such that
$\mathbb{E}\left[ {\mathfrak m}(Y , X, \theta_0) \, \big| \, X=x, \, A = \alpha  \right] = 0$. We choose 
$$
   s(y , x, \theta_0)  =  {\mathfrak m}(y , x, \theta_0) \, .
$$
Using the definition of $ s^{(1)}(y,x,\theta)$ in
\eqref{S1Formula} we then find
$s^{(1)}(y,x,\theta_0) =  {\mathfrak m}(y , x, \theta_0)$,
and therefore also
$$
  s^{(q)}(y,x,\theta_0) = {\mathfrak m}(y , x, \theta_0) \, ,
$$
for all $q \in \{1,2,\ldots\}$. We therefore also find
$s_\infty(y,x,\theta_0) = \lim_{q \rightarrow \infty} s^{(q)}(y,x,\theta_0) = {\mathfrak m}(y , x, \theta_0)$,
which is what we wanted to show.   
\end{proof}

\end{document}